\newtheorem{thm}{Theorem}[section] 
\newtheorem{lem}[thm]{Lemma} 
\newtheorem{prop}[thm]{Proposition}
\theoremstyle{definition} 
\newtheorem{cond}[thm]{Conditions}  
\theoremstyle{remark}  
\def\beq{\begin{eqnarray}}  
\def\eeq{\end{eqnarray}}  
\def\bsp{\begin{split}}  
\def\esp{\end{split}}
\def\d{\mathrm{d}}
\newcommand{\mf}[1]{{\mathfrak #1}}   
\newcommand{\mb}[1]{{\mathbb #1}}   
\newcommand{\mbold}[1]{\mbox{\boldmath{\ensuremath{#1}}}}
\def \bl {\mbox{{\mbold\ell}}}
\def \bn {\mbox{{\bf n}}}
\def \bm {\mbox{{\bf m}}}
\begin{document}

\title{\Large\textbf{$\mathcal{I}$-degenerate pseudo-Riemannian metrics}}  
\author{{{Sigbj\o rn Hervik$^\text{\textleaf}$, Anders Haarr$^\text{\textleaf}$ and Kei Yamamoto$^\text{\textmarried}$  }  }
 \vspace{0.3cm} \\     
$^\text{\textleaf}$Faculty of Science and Technology,\\     
 University of Stavanger,\\  N-4036 Stavanger, Norway         
\vspace{0.3cm} \\      
$^\text{\textmarried}$Department of Physics, Kobe University, \\
Kobe, 657-8501, Japan
\vspace{0.3cm} \\
\texttt{sigbjorn.hervik@uis.no}
\\     
\texttt{anders.haarr@uis.no}
\\      
\texttt{kei@phys.sci.kobe-u.ac.jp}
\\ }     
\date{\today}     
\maketitle   
\thispagestyle{firststyle}
\pagestyle{fancy}   
\fancyhead{} 
\fancyhead[EC]{Hervik, Haarr and Yamamoto}   
\fancyhead[EL,OR]{\thepage}   
\fancyhead[OC]{$\mathcal{I}$-degenerate pseudo-Riemannian metrics}   
\fancyfoot{} 

\abstract 

In this paper we study pseudo-Riemannian spaces with a degenerate curvature structure i.e. there exists a continuous family of metrics having identical polynomial curvature invariants. We approach this problem by utilising an idea coming from invariant theory. This involves the existence of a boost which  is assumed to extend to a neighbourhood. This approach proves to be very fruitful: It produces a class of metrics containing all known examples of $\mathcal{I}$-degenerate metrics. To date, only Kundt and Walker metrics have been given, however, our study gives a plethora of examples showing that $\mathcal{I}$-degenerate metrics extend beyond the Kundt and Walker examples. 

The approach also gives a useful criterion for a metric to be $\mathcal{I}$-degenerate. Specifically, we use this to study the subclass of VSI and CSI metrics (i.e., spaces where polynomial curvature invariants are all vanishing or constants, respectively).
\newpage 

\tableofcontents 
\section{Introduction}
In differential geometry and in pseudo-Riemannian geometry, one can form polynomial curvature invariants by taking contractions of the Riemann tensor and its covariant derivatives. For example, the Ricci scalar, and the Kretschmann scalar, $R_{\mu\nu\alpha\beta}R^{\mu\nu\alpha\beta}$, are simple examples of such invariants \cite{invariants}. 
\\
\\
Let ${\mathcal{I}}$ be the set of \emph{all} such polynomial invariants formed by full contractions of the Riemann tensor and its covariant derivatives. This set is finitely generated \cite{Goodman}, and hence, we can assume that $\mathcal{I}$ is finite. If we are given a metric, $g$, we can compute the value of these invariants, ${I}[g]\in \mathcal{I}$. A question is now, to what extent is the value ${I}[g]$ unique? Here, we will discuss \emph{$\mathcal{I}$-degenerate metrics} which are defined to be metrics having a \emph{degenerate curvature stucture} in the sense that there are continuous families $g_{\tau}$ of non-diffeomorphic metrics\footnote{We will use the following terminology: A \emph{diffeomorphism}, $f: M\rightarrow N$ is a smooth bijection between smooth manifolds. If the manifolds, $M$ and $N$, are equipped with metrics, $g$ and $h$ respectively, then if there exists a diffeomorphism $f$ such that the metrics are related via $f^*h=g$,  we will say that the \emph{metrics are diffeomorphic}. If no such $f$ exists, then $h$ and $g$ are non-diffeomorphic metrics. In the differential geometry literature, one often uses the term isometry for such a map, however, we will reserve the word isometry for diffeomorphisms $f:M\rightarrow M$ where $f^*g=g$. } having the same invariants, i.e., $I[g_\tau]$ does not depend on $\tau$ \cite{degen}. 
\\
\\
In the Riemannian case where the metric is positive definite, there are no $\mathcal{I}$-degenerate metrics, implying that the space is completely determined by the value of $I[g]$ \cite{OP}. In the Lorentzian case, the situation is very different, as there is a large family of $\mathcal{I}$-degenerate metrics \cite{degen}. In this case, all known examples belong to the Kundt class and these metrics have been studied in some detail. For example, the VSI metrics (all polynomial curvature invariants vanish), are known to be all Kundt \cite{VSI}. The CSI case (all invariants are constants), has been studied to some extent and the $\mathcal{I}$-degenerate metrics are also believed to be of Kundt class \cite{CSI, kundt}. In general, examples of $\mathcal{I}$-degenerate Lorentzian metrics have only been found in the Kundt class. In other signatures, other possibilities occur, and to date, only Walker examples (in addition to the Kundt metrics) have been given \cite{Walker,VSI1,Alcolado}. In particular, in 4 dimensional neutral space, it was shown that all VSI spaces are of either Kundt or Walker type \cite{VSI2}. 
\\
\\
The question the paper addresses has an obvious interest in the classification problem in differential geometry. Differential geometry is used in a wide area in physics and 
the Lorentzian case is of clear interest in theories of gravity. Other signatures have applications in physics as well; for example, in twistor theory \cite{twistor,Dun}, the metric is a 4 dimensional metric of neutral signature.
\\
\\
In this paper we study pseudo-Riemannian spaces of arbitrary signature with a degenerate curvature structure. We approach this problem in a new way and find new examples of $\mathcal{I}$-degenerate metrics. Indeed, in a systematic study we define a class of metrics which contain all known examples of $\mathcal{I}$-degenerate metrics, including the Kundt and Walker cases. The underlying assumption is motivated by invariant theory which states that a certain boost limit should exist, at least pointwise, for these spaces \cite{VSI2,RS, align, degenInv}. We assume that this boost limit extends to a neighbourhood, thereby constraining the form of the metric. However, this class is sufficiently rich to include \emph{all known examples} of $\mathcal{I}$-degenerate metrics. 
As a by-product of this assumption, we get a way to determine the invariants for such spacetimes using the metric of a simpler space. 
\\
\\
The structure of the paper is as follows: First we look at the form of a pseudo-Riemannian metric under the assumption that there exists  a surface-forming null $k$-form $\mbold{F}$. This yields a generalisation of both the Kundt and Walker spaces and gives a geometric interpretation of the class under consideration. We also identify subclasses of this family of metrics by imposing appropriate conditions on the covariant derivative of $\mbold{F}$, amongst which are the Kundt and Walker classes.  Then we start anew and look at the form of the metric from the point of view of invariant theory. Here we assume the existence of a particular limit and show that the resulting form of the metric is actually a subclass of the previously considered metrics with the closed $k$-form $\mbold{F}$. In other words, using invariant theory we show that there is a subclass of degenerate metrics hiding within the first class. Then we constrain the coefficients of these degenerate metrics by utilising the boost-weight decomposition and the existence of a particular boost limit. Lastly, we discuss the VSI and CSI subclasses.  
\section{Canonical form of the metric}

Here we assume the existence of a $k$-form $\boldsymbol{F} = \boldsymbol{\ell }^1 \wedge \boldsymbol{\ell }^2 \wedge \cdots \wedge \boldsymbol{\ell }^k $, where $\boldsymbol{\ell }^i $'s are all null and mutually orthogonal. We then impose progressively stronger conditions onto derivatives of $\boldsymbol{F}$ to derive a hierarchy of classes. These conditions, envisaged as a generalisation of the Kundt and Walker conditions, allow us to write the metric in a canonical form. In this section no further assumptions will be made, however, in the next section an assumption of degeneracy of the curvature structure of the (general pseudo-Riemannian) metric will yield a subclass of these metrics. Hence, an independent interpretation of these $\mathcal{I}$-degenerate metrics is provided in the current section.
\\
\\
The space-time dimension is $n = 2k + m$ with signature $(k+p , k-p+m) ,\ p \leq m $. We are given $k \leq n/2$ null 1-forms $\boldsymbol{\ell }^{i}$, which are linearly independent and orthogonal. In discussing various geometrical conditions, it is convenient to work in terms of a $k$-form
defined by
\begin{equation}
\boldsymbol{F} = \boldsymbol{\ell }^1 \wedge \boldsymbol{\ell }^2 \wedge \cdots \wedge \boldsymbol{\ell }^k \ .
\end{equation}
Now we impose two surface-forming conditions the first of which is given as follows:
\begin{cond}[Primary surface-forming condition]\label{cond:1}
There exists a 1-form $\boldsymbol{\Sigma }$ such that
\begin{equation}
d\boldsymbol{F} = \boldsymbol{\Sigma } \wedge \boldsymbol{F} \ .
\end{equation}
\end{cond}
Let us introduce a distribution $\mathcal{D}$, which we shall call the orthogonal complement of $\boldsymbol{F}$, defined by
\begin{equation}
\mathcal{D}_p = \{ \boldsymbol{v} \in T_p \mathcal{M}\  | \ \langle \boldsymbol{\ell }^i , \boldsymbol{v} \rangle = 0, i = 1 , \cdots , k \} 
\end{equation}
From the Frobenius theorem, $\mathcal{D}$ is integrable and there exist functions $u^i $ such that
\begin{equation}
\boldsymbol{\ell }^i = \lambda ^i_{\ j} du^j 
\end{equation}
where the matrix $\lambda ^i_{\ j}$ is invertible
and $u^i = {\rm const}$ specifies an integral manifold of $\mathcal{D}$. 
Next, we construct a canonical frame $\{ \boldsymbol{\ell }^i , \boldsymbol{m}^a , \boldsymbol{n}^{\hat{i}}  \} , a = 1 , \cdots , m = n-2k$ that satisfies
\begin{equation}
\begin{split}
& \boldsymbol{g}\left( \boldsymbol{\ell }^i , \boldsymbol{n}^{\hat{j}} \right) = \delta ^{i\hat{j}} \ , \quad \boldsymbol{g} \left( \boldsymbol{m}^a , \boldsymbol{m}^b \right) = \eta ^{ab} \ , \\
& \boldsymbol{g} \left( \boldsymbol{l}^i , \boldsymbol{m}^a \right) = 0 \ , \quad \boldsymbol{g} \left( \boldsymbol{n}^{\hat{i}} , \boldsymbol{n}^{\hat{j}} \right) = 0 \ , \quad \boldsymbol{g} \left( \boldsymbol{n}^{\hat{i}} , \boldsymbol{m}^a \right) = 0 \ .
\end{split}
\end{equation}
$\eta ^{ab}$ is a pseudo-Euclidean metric of signature $(p,m-p)$.
 Denoting the metric-induced isomorphism by $\sharp $ i.e. for an arbitrary vector $\boldsymbol{v}$ and 1-form $\boldsymbol{\omega }$
 \begin{equation}
  \boldsymbol{g} \left( \boldsymbol{\omega }^{\sharp } , \boldsymbol{v} \right) = \langle \boldsymbol{\omega } , \boldsymbol{v} \rangle \ ,
  \end{equation}
  and using the notation
  \begin{equation}
  \boldsymbol{\ell }_{\hat{i}} = \left( \boldsymbol{\ell }^i \right) ^{\sharp } \ , \quad \boldsymbol{n}_i = \left( \boldsymbol{n}^i \right) ^{\sharp } \ , \quad \boldsymbol{m}_a = \eta _{ab} \left( \boldsymbol{m}^b \right) ^{\sharp } \ ,
  \end{equation}
 we note that $\{ \boldsymbol{n}_i , \boldsymbol{m}_a , \boldsymbol{\ell }_{\hat{i}} \}$ is dual to the canonical frame $\{ \boldsymbol{\ell }^i , \boldsymbol{m}^a , \boldsymbol{n}^{\hat{i}}\} $ in the usual sense and 
 \begin{equation}
 \boldsymbol{\ell }_{\hat{i}} \in \mathcal{D} \ , \quad \boldsymbol{m}_a \in \mathcal{D} \ .
 \end{equation}
The vectors $\boldsymbol{\ell }_{\hat{i}}$ constitute a null distribution $\mathcal{D}^{\ast } \subset \mathcal{D}$. 
The following condition then ensures that $\mathcal{D}^{\ast }$ is integrable:
\begin{cond}[Secondary surface-forming condition]\label{cond:2}
In the canonical frame $\{ \boldsymbol{\ell }^i , \boldsymbol{m}^a , \boldsymbol{n}^{\hat{i}}  \}$,
the components of the covariant derivative of $\boldsymbol{F}$ satisfy
\begin{equation}
\nabla _{\hat{i}} F_{a \mu _1 \cdots \mu _{k-1}} = 0 \ , 
\end{equation}
where, and in what follows, the Greek indices run from $1$ to $n$.
\end{cond}
In order to see that this assumption implies the integrability of $\mathcal{D}^{\ast }$,  
let us look at the covariant derivatives of $\boldsymbol{\ell }^i$ along $\mathcal{D}^{\ast }$;
\begin{equation}
\nabla _{\boldsymbol{\ell }_{\hat{i}}} \boldsymbol{\ell }^k = \omega ^k_{\ \hat{i}j} \boldsymbol{\ell }^j + \omega ^k_{\ \hat{i}a}\boldsymbol{m}^a + \omega ^k_{\ \hat{i}\hat{j}} \boldsymbol{n}^{\hat{j}} \ .
\end{equation}
$\omega ^{\lambda }_{\ \mu \nu }$'s are the components of connection 1-forms defined by
\begin{equation}
\omega ^{\lambda }_{\ \mu \nu } \boldsymbol{e}^{\nu } = \nabla _{\boldsymbol{e}_{\mu }}\boldsymbol{e}^{\lambda } \ , \quad \{ \boldsymbol{e}^{\mu } \} = \{ \boldsymbol{\ell }^i , \boldsymbol{m}^a ,\boldsymbol{n}^{\hat{i}} \} \ , \quad \{ \boldsymbol{e}_{\mu } \} = \{ \boldsymbol{\ell }_{\hat{i}} , \boldsymbol{m}_a , \boldsymbol{n}_i \} \ .
\end{equation}
One can compute the connection 1-forms from exterior derivatives of $\boldsymbol{e}^{\mu }$
(see Appendix A), and $\omega ^k_{\ \hat{i} \hat{j}} = 0$ since conditions \ref{cond:1} imply that
there exists a matrix of 1-forms $\boldsymbol{\sigma }^i_{\ j}$  such that $d\boldsymbol{\ell }^i = \boldsymbol{\sigma }^i_{\ j} \wedge \boldsymbol{\ell }^j $. 

Now evaluating the commutator among $\boldsymbol{\ell }_{\hat{i}}$'s;
\begin{eqnarray*}
\left[ \boldsymbol{\ell }_{\hat{i}} , \boldsymbol{\ell }_{\hat{j} } \right] &=& \nabla _{\boldsymbol{\ell }_{\hat{i}}} \boldsymbol{\ell }_{\hat{j}} - \nabla _{\boldsymbol{\ell }_{\hat{j}}} \boldsymbol{\ell }_{\hat{i}} \\
&=& \left( \nabla _{\boldsymbol{\ell }_{\hat{i}}} \boldsymbol{\ell }^j \right) ^{\sharp } - \left( \nabla _{\boldsymbol{\ell }_{\hat{j}}} \boldsymbol{\ell }^i \right) ^{\sharp } \\
&=& \left( \omega ^j_{\ \hat{i}k} - \omega ^i_{\ \hat{j}k} \right) \boldsymbol{\ell }_k + \left( \omega ^j_{\ \hat{i}a} - \omega ^i_{\ \hat{j}a} \right) \boldsymbol{m}_a \ ,
\end{eqnarray*}
one can conclude that it is sufficient for integrability of $\mathcal{D}^{\ast }$ ($\Leftrightarrow \boldsymbol{\ell }_{\hat{i}}$'s being involutive) to have vanishing $\omega ^k_{\ \hat{i}a}$, which
is equivalent to the conditions \ref{cond:2}. The same condition also turns out to be necessary (c.f. Appendix A).
\\
\\
Once the integrability of $\mathcal{D}^{\ast }$ is established, one can choose a coordinate system
$\{ y^I , v^{\hat{i}} \} ,\  I = 1, \cdots , n-k$ so that we can write
\begin{equation}
\boldsymbol{\ell }_{\hat{i}} = \kappa _{\hat{i}}^{\ \hat{j}} \partial _{v^{\hat{j}}} 
\end{equation}
where the matrix $\kappa _{\hat{i}}^{\ \hat{j}}$ is invertible. Since $\mathcal{D}^{\ast } \subset \mathcal{D}$, we have
\begin{equation}
0 = \langle \boldsymbol{\ell }^i , \boldsymbol{\ell }_{\hat{j}} \rangle = \lambda ^i_{\ k}\kappa _{\hat{j}}^{\ \hat{l}}\langle du^k , \partial _{\hat{v}^{\hat{l}}} \rangle \ ,
\end{equation}
hence
\begin{equation}
0 = \langle du^i , \partial _{v^{\hat{j}}} \rangle = \partial _{v^{\hat{j}}}u^i \quad \Leftrightarrow \quad du^i = f^i_{\ I}(y) dy^I
\end{equation}
where $f^i_{\ I}$'s are functions of $y^I$'s only.
Therefore, one can find a coordinate transformation $u^i (y) , x^a (y), \ a = 1, \cdots , m = n-2k$ independently of $v^{\hat{i}}$, so that $\{ u^i , x^a , v^{\hat{i}} \} $ form a local coordinate system of the entire manifold, with $\{ x^a , v^{\hat{i}} \} $ spanning $\mathcal{D}$.
By construction, it is clear that 
\begin{equation}
g \left( \partial _{v^{\hat{i}}} , \partial _{v^{\hat{j}}} \right) = 0 \ , \quad g\left( \partial _{x^a} , \partial _{v^{\hat{i}}} \right) = 0 \ .
\end{equation}
This corresponds to the following form of the metric: 
\begin{equation}
g_{\mu \nu } = \left( \begin{array}{ccc}
				A_{ij} & B_{ib} & a_{i\hat{j}} \\
				B^t_{aj} & g_{ab} & 0 \\
				a^t_{\hat{i}j} & 0 & 0 \\
				\end{array} \right) \ .
\end{equation}
The line element is then \footnote{We will use hats on the index $i$ when it is essential to distinguish between $u^i$ and $v^{\hat{i}}$. However, when there is no fear of confusion, these will be omitted.}: 
\begin{equation}
ds^2 = 2d u^{i } \left( a_{ij}d v^{j } + A_{ij}d u^{j } + B_{i a} d x^a \right) + g_{ab} d x^a d x^b \ .
\end{equation}
The connection 1-forms for this metric can be found in Appendix A. For later reference, it is useful to consider a class of transformations leaving this form of the metric invariant. Consider the transformation:
\beq
(\tilde{u}^i, \tilde{x}^a, \tilde{v}^j)=(u^i,x^a,f^j( u^n,x^b,v^m)) \ ,
\label{v-trafo}\eeq
for functions $f^j$. We note that
\[ d\tilde{v}^j=(\partial_{u^n}f^j) du^n+(\partial_{x^b}f^j) dx^b +(\partial_{v^m}f^j)dv^m.\]
This may be used to simplify the metric functions $a_{ij}$. Especially, if, for a fixed $i$
\beq 
(\partial_{v^n}a_{im})dv^n\wedge dv^m =0 \ , 
\label{a-condition}\eeq
(which means that $\d (a_{im}dv^m) =0$ as functions of $v^n$ ), then we can use the transformation eq.(\ref{v-trafo}) to bring $ a_{ij}\d v^j \mapsto \delta_{ij}\d v^j$ (fixed $i$). 

\subsection{Important subclasses of metrics} 
\label{sect:subclasses}
Let us point out some important subclasses of these metrics given in terms of the $k$-form $\boldsymbol{F}$. This form is null and surface-forming in the sense made accurate by the conditions \ref{cond:1} and \ref{cond:2}. We will assume the following classes to be of increasing speciality, i.e. we assume class $N+1$ is a subclass of class $N$ etc. 

\paragraph{Type I: "Shear-free and expansion-free"}
As we will see later, all of the examples of $\mathcal{I}$-degenerate metrics given in this article belong to this class. In this class the transverse metric $g_{ab}$ is independent of the $v^j$'s in the coordinate basis. This is equivalent to requiring that $F$ obeys:
\[ \nabla_aF_{b\mu_1 \cdots \mu_{k-1}}=0\ , \]
in the canonical basis $\{ \boldsymbol{\ell }^i , \boldsymbol{m}^a , \boldsymbol{n}^{\hat{i}} \}$ constructed above (not in the coordinate basis where $g_{\mu \nu }$ is written down). 
\\
\\
We note that in the special case where $k=1$, i.e., $\boldsymbol{F}$ is a 1-form, then this case reduces to the {\bf Kundt class} \cite{kundt}. Furthermore, for the Kundt metrics, condition II below is automatically satisfied.

\paragraph{Type II: The matrix $a_{i\hat{j} }=\delta_{i\hat{j}}$.}
If the matrix $a_{i\hat{j}}$ obeys equation (\ref{a-condition}), for all $i$, we can use the transformation, eq.(\ref{v-trafo}), to set $a_{i\hat{j} }=\delta_{i\hat{j}}$. This amounts to require that $\mbold{F}$ obeys (again in the canonical frame): 
\[ \nabla_{\hat{j}}F_{\hat{i}\mu_1 \cdots \mu_{k-1}}=0. \]

\paragraph{Type III: The Walker case}
The Walker class is defined as the case when $\mbold{F}$ is invariant \cite{Walker}: 
\[ \nabla_\mu \boldsymbol{F} = k_\mu \boldsymbol{F},\]
for a vector $k$. This condition alone encompasses the above conditions, I and II. In terms of the metric components this means that the functions $B_{ia}$ do not depend on the $v^{\hat{i}}$'s. 

\paragraph{Type IV: $\mbold{F}$ is covariantly constant}
This case is defined by:
\[ \nabla_\mu \boldsymbol{F}=0, \]
and implies conditions I - III are fulfilled. Interestingly, this implies that the null-form fulfills Killing-Yano equations, but it corresponds to the degenerate case where the Killing-Yano tensor is null.

\paragraph{Type V: All $du^i$ are covariantly constant.} This amounts to the case where there is no $v^{\hat{i}}$-dependence in any of the metric functions which implies that the vectors $\partial_{v^{\hat{i}}}$ are Killing vectors.  Thus in this case the space-time posseses $k$ null Killing vectors. 
\\
\\
In the examples given later, metrics from all of these categories I - V can be found. 

\section{Invariant theory and $\mathcal{I}$-degenerate metrics}
We will now review the boost-decomposition method as in \cite{VSI1,VSI2,bw} and introduce the $\textbf{S}_i$ and $\textbf{N}_i$ properties of a tensor. Utilising the ideas from invariant theory and degenerate tensors, we will, under the assumption that the metric has a similar well-defined limit, reach a class of spaces which are degenerate in the sense that their curvature structures are degenerate. 
\subsection{Boost-weight decomposition}

If we have a pseudo-Riemannian manifold with dimension $(2k + m)$ 
and signature $(k, k + m)$, we can choose a suitable null frame so that the metric can be written:
\begin{align}
\label{metric}
ds^2 = 2( \boldsymbol{\ell}^1 \boldsymbol{n}^1 + ... + \boldsymbol{\ell}^i \boldsymbol{n}^i + ... + \boldsymbol{\ell}^k \boldsymbol{n}^k) + \delta_{ab} \boldsymbol{m}^a \boldsymbol{m}^b
\end{align}
where $a,b=1,...,m$.
\\
\\
 First we choose a real null frame so that we can write down the metric as (\ref{metric}). We then look at the $k$ independent boosts that form an Abelian subgroup of the group $SO(k,k + m)$:
\begin{align}
\label{boost}
(\bl^1, \bn^1) &\rightarrow (e^{\lambda_1} \bl^1, e^{-\lambda_1} \bn^1) \nonumber \\
(\bl^2, \bn^2) &\rightarrow (e^{\lambda_2} \bl^2, e^{-\lambda_2} \bn^2) \nonumber \\
&.  \nonumber \\
&.  \nonumber \\
(\bl^k, \bn^k) &\rightarrow (e^{\lambda_k} \bl^k, e^{-\lambda_k} \bn^k)
\end{align}
where the $\lambda_i$'s are real constants.
This is now a pointwise action on $T_pM$. For a tensor $T$, we now introduce \textit{boost weights}, $\boldsymbol{b}$ $\in$ $\mathds{Z}^k$ in the following manner:
We look at, $T_{\mu_1...\mu_n}$, an arbitrary component of $T$ with respect to the frame given in (\ref{metric}).
\begin{enumerate}
\item Consider a boost given in eq.(\ref{boost}). Then it will transform as $T_{\mu_1...\mu_n} \rightarrow e^{(b_1 \lambda_1 + b_2 \lambda_2 + ... + b_k \lambda_k)}T_{\mu_1...\mu_n}$ for some integers $b_1,...,b_k$. 
\item Then $T_{\mu_1...\mu_n}$ is of boost weight $\boldsymbol{b} = (b_1,b_2,...,b_k)$.
\end{enumerate}
Now one can decompose a tensor into boost weights accordingly:
\begin{align}
T = \sum_{\boldsymbol{b} \in \mathds{Z}^k} (T)_{\boldsymbol{b}}
\label{eq:T_b}\end{align}
where $(T)_{\boldsymbol{b}}$ means the projection onto the subspace (of components) of boost weight $\boldsymbol{b}$. 
\\
\\
If we take the tensor product of two tensors $T$ and $S$, the boost weights obey the following additive rule:
\begin{align}
(T\otimes S)_{\boldsymbol{b}}~=~\sum_{\hat{\boldsymbol{b}} + \bar{\boldsymbol{b}} = \boldsymbol{b}} (T)_{\bar{\boldsymbol{b}}} \otimes (S)_{\hat{\boldsymbol{b}}}
\end{align}
\subsection{The $\textbf{S}_i$- and $\textbf{N}$-properties of a tensor}
We first look at a tensor $T$ and define some conditions its components may fulfill:
\\
\\
\textbf{Definition 1}
\begin{align}
\text{B1)~~} (T)_{\boldsymbol{b}} &= 0, \text{for all~} \boldsymbol{b} = (b_1,b_2,...,b_k), ~b_1 > 0 \nonumber \\
\text{B2)~~} (T)_{\boldsymbol{b}} &= 0, \text{for all~} \boldsymbol{b} = (0~,b_2,...,b_k),~ b_2 > 0 \nonumber \\
\text{B3)~~} (T)_{\boldsymbol{b}} &= 0, \text{for all~} \boldsymbol{b} = (0~,0~,...,b_k),~ b_3 > 0 \nonumber \\
&. \nonumber \\
&. \nonumber \\
\text{B$k$)~~} (T)_{\boldsymbol{b}} &= 0, \text{for all~} \boldsymbol{b} = (0,0,...,0,b_k), ~b_k > 0 \nonumber
\end{align}
\textit{A tensor $T$ possesses the $\textbf{S}_1$ property if there exists a null frame such that condition B1) is satisfied. Furthermore the tensor possesses the $\textbf{S}_i$ property if there exists a null frame such that conditions  B1)-B$i$) are fulfilled.}
\\
\\
\textbf{Definition 2}
\\
\textit{A tensor $T$ possesses the $\textbf{N}$ property if there exists a null frame such that conditions B1)-B$k$) are fulfilled and:}
\begin{align}
(T)_{\boldsymbol{b}} = 0, \text{for~} \boldsymbol{b} = (0,0,...0,0). \nonumber
\end{align}
These conditions can be extended \cite{VSI1} in the following manner:
\\
\\
Consider a tensor $T$, which does not necessarily have any of the $\textbf{S}_i$ properties defined above. Since the boost weights $\boldsymbol{b}$ $\in \mathds{Z}^k \subset \mathds{R}^k$, we can utilise a linear transformation that maps the boost weight on a lattice in $\mathds{R}^k$. More precisely, the transformation $G \in GL(k)$ is a map:
\begin{align}
G:\mathds{Z}^k \rightarrow \Gamma 
\end{align}
where $\Gamma$ is a lattice in $\mathds{R}^k$. Now, if there exist a $G \in GL(k)$ such that after having transformed the boost weights to $G \boldsymbol{b}$, the tensor $T$ now satisfies some of the properties above, we say that $T$ possesses the $\textbf{S}_i^G$- or $\textbf{N}^G$ property. If we have two tensors $T$ and $S$, both possessing the $\textbf{S}_i^G$-property, \textit{with the same $G$}, we can form the tensor product:
\begin{align}
(T\otimes S)_{G\boldsymbol{b}}~=~\sum_{G\hat{\boldsymbol{b}} + G\bar{\boldsymbol{b}} = G\boldsymbol{b}} (T)_{G\bar{\boldsymbol{b}}} \otimes (S)_{G\hat{\boldsymbol{b}}}.
\end{align}
So the tensor product also has the $\textbf{S}_i^G$-property. Note also that if $G = I$ then the  $\textbf{S}_i^G$-property reduces to the $\textbf{S}_i$-property.
\\
\\
The role of these properties can be given in terms of the following result \cite{VSI2}: 
\begin{thm}\label{mainthm}
A tensor $T$ is not characterised by its invariants if and only if it possesses (at least) the ${\bf S}_1^G$-property.
\end{thm}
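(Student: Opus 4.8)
The plan is to prove both directions of the equivalence by connecting the $\mathbf{S}_1^G$-property (a statement about boost-weight components vanishing in the positive half-space cut out by a single functional) to the existence of a degenerating boost limit, which is the invariant-theoretic mechanism behind non-characterisation. The key technical tool is the boost-weight decomposition $T=\sum_{\boldsymbol{b}}(T)_{\boldsymbol{b}}$ together with the additive rule for tensor products, since every polynomial invariant of $T$ is built from full contractions of tensor powers $T^{\otimes N}$, and a full contraction only picks up the total-boost-weight-zero part.

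\emph{Sufficiency ($\mathbf{S}_1^G \Rightarrow$ not characterised).} First I would reduce, via the transformation $G\in GL(k)$, to the case $G=I$, so that $T$ satisfies condition B1): $(T)_{\boldsymbol{b}}=0$ whenever $b_1>0$. The plan is then to apply the one-parameter boost $(\boldsymbol{\ell}^1,\boldsymbol{n}^1)\mapsto(e^{\lambda}\boldsymbol{\ell}^1,e^{-\lambda}\boldsymbol{n}^1)$ and consider the limit $\lambda\to+\infty$. Under this boost a component of boost weight $\boldsymbol{b}$ scales by $e^{b_1\lambda}$; since B1) kills all $b_1>0$ pieces, every surviving component has $b_1\le 0$ and hence stays bounded (indeed the $b_1<0$ pieces decay), so the limit $T_0=\lim_{\lambda\to\infty}e^{\lambda\cdot\text{boost}}\!\cdot T$ exists and equals $(T)_{b_1=0}$, the boost-weight-zero-in-the-first-slot part. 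I would then argue that $T$ and $T_0$ have identical invariants: because a full contraction extracts only the total boost weight $\boldsymbol{b}=\boldsymbol{0}$ component of any $T^{\otimes N}$, and by the additive rule such a contribution can only come from the $b_1=0$ sectors of each factor, the positive and strictly-negative first-slot pieces of $T$ never contribute to any invariant. Since $T_0$ is the limit of frame transformations (boosts) applied to $T$, these transformations are isometries of the bilinear form, giving a continuous family $T_\lambda$ interpolating $T$ and $T_0$ with constant invariants; as $T_0\neq T$ in general, $T$ is not characterised by its invariants.

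\emph{Necessity (not characterised $\Rightarrow \mathbf{S}_1^G$).} This is the harder direction and the main obstacle. Here I would appeal to the structure theory of the algebraic group action, following \cite{VSI2,degenInv}. The statement ``$T$ is characterised by its invariants'' should be translated into the closedness of the orbit of $T$ under the (complexified) structure group, via a Rosenfeld/Hilbert--Mumford type criterion from geometric invariant theory: a tensor fails to be characterised precisely when its orbit is non-closed, and by the Hilbert--Mumford criterion a non-closed orbit admits a one-parameter subgroup $\lambda(t)$ driving $T$ to a boundary point. The crux is to show that such a degenerating one-parameter subgroup can be taken, after conjugation, to lie in the maximal torus of boosts, and that its weight decomposition is exactly a half-space condition of the form B1) in a $G$-transformed lattice. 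Concretely, the one-parameter subgroup defines a linear functional on the weight lattice $\mathds{Z}^k$, and the components that must vanish for the limit to exist are precisely those on which this functional is positive; choosing $G$ to align this functional with the first coordinate yields the $\mathbf{S}_1^G$-property.

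I expect the genuine difficulty to be this last reduction to the maximal torus: a priori the Hilbert--Mumford one-parameter subgroup need not be a pure boost, and one must use the Cartan/Iwasawa decomposition of $SO(k,k+m)$ together with the fact that the non-null directions (the $\boldsymbol{m}^a$ block with its definite metric) contribute no unbounded scaling, so that only the boost torus governs whether a limit exists. I would therefore lean on the cited invariant-theoretic results to supply the orbit-closure characterisation and focus the argument on verifying that the degeneration direction is realised by a boost and reading off the resulting vanishing pattern as a single linear inequality $b_1>0$ after the change of lattice $G$.
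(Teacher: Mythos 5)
Your proposal follows essentially the same route as the paper: the paper's own (sketched) argument likewise invokes the Richardson--Slodowy result to produce a one-parameter subgroup $\mathcal{X}$ in the Lie algebra of the boosts with a finite limit $\lim_{\tau\rightarrow\infty}\exp(\tau\mathcal{X})(T)$, deduces $(T)_{\mathbf{b}}=0$ whenever $\tilde{\mathbf{b}}\cdot\mathbf{b}>0$, and reads this off as the $\mathbf{S}_1^G$-property, deferring the remaining details (including the sufficiency direction via total-boost-weight-zero contractions) to \cite{VSI2}. The reduction to the boost torus that you single out as the main difficulty is precisely the step the paper outsources to \cite{RS}, so your plan matches the intended proof.
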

The crucial point in the proof of this is the existence of a boost, $B_\tau$, of the form eq.(\ref{boost}), so that the components of $T$ under the action of the boost has a well-defined limit $\tau \rightarrow \infty$. Recalling some of the proof, 
 considering the tensor $T$  not characterised by the invariants implies the existence of a ${\mathcal X}$ in the Lie algebra of the boosts so that \cite{RS}
\beq
\lim_{\tau\rightarrow \infty}\exp(\tau{\mathcal X})(T) = T_\infty,
\label{eq:limit}\eeq
which is finite.  Let $\tilde{\mbold b}$ be the boost that represents ${\mathcal X}$. Then, if $(T)_{\mbold b}\neq 0$, we get the requirement  $\tilde{\mbold b}\cdot{\mbold b }\leq 0$. In particular, 
\beq
\exp(\tau\tilde{\mbold b}\cdot{\mbold b })(T)_{\mbold b} \longrightarrow  \begin{cases}\quad  (T)_{\mbold b}, \quad& \tilde{\mbold b}\cdot{\mbold b }=0, \\
 \quad 0, \quad  & \tilde{\mbold b}\cdot{\mbold b }<0,
\end{cases}
\eeq
all other $(T)_{\mbold b}$ must be zero (or else the limit will not exist):
\beq 
\label{eq:posbw=0}
(T)_{\mbold b}=0, \quad \tilde{\mbold b}\cdot{\mbold b}>0.\eeq
This implies the ${\bf S}_1$-property. 

Henceforth, we will call the boost that generates the (pointwise) limit, eq. (\ref{eq:limit}), for the \emph{boost vector} and denote it ${\mbold b}$.

\subsection{The $\mathcal{I}$-degenerate metrics}
\label{sect:deg}
We will first prove a result which is useful in the understanding of the relation between different metrics with the same invariants, and to understand how the limit, eq.(\ref{eq:limit}), from the invariant theory point of view, can be achieved. 
\begin{lem}\label{Lemma}
Consider a null-frame $\{\bl^i,\bn^i,{\bm}^a\}$ at a point $p$. Given also a boost of the frame as follows at $p$: 
\beq \label{eq:frameboost}
\{\bl^i,\bn^i,\bm^a\}\mapsto \{e^{\lambda_i\tau}\bl^i,e^{-\lambda_i\tau}\bn^i,\bm^a\}, \quad \tau\in \mb{R}, (\lambda_i)\in \mb{R}^k.\eeq
Then there exist neighbourhoods $U$ and $\widetilde{U}$ of $p$ and coordinate systems $(u^i,v^i,x^a)$ of  $U$ and $(\tilde{u}^i,\tilde{v}^i,\tilde{x}^a)$ of $\widetilde{U}$ where $p$ is the origin of each coordinate system, such that the diffeomorphism $\varphi_\tau:U\rightarrow \widetilde{U}$, given by: 
\beq
(\tilde{u}^i,\tilde{v}^i,\tilde{x}^a)=(e^{\lambda_i\tau}{u}^i,e^{-\lambda_i\tau}{v}^i,{x}^a), 
\label{diffboost}\eeq
induces the boost $(\ref{eq:frameboost})$ at $p$.  Furthermore,  the diffeomorphism $\varphi_\tau$ can be considered as a 1-parameter family of diffeomorphisms generated by the vector field:
\beq
X=\sum_i\lambda_i\left(u^i\frac{\partial}{\partial u^i}-v^i\frac{\partial}{\partial v^i}\right).
\eeq

\end{lem}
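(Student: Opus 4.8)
The plan is to realise the given pointwise boost as the differential at $p$ of a coordinate-scaling map, so that the whole statement reduces to choosing a chart adapted to the frame and reading off a Jacobian. The key observation is that the boost and the scaling $(\ref{diffboost})$ have exactly the same diagonal structure, so if I arrange for the coordinate vector fields to coincide with the frame $\{\bl^i,\bn^i,\bm^a\}$ at $p$, then a map that scales the coordinates will, by the chain rule, scale precisely those frame vectors by the prescribed factors. Note that the null/metric structure of the frame plays no role here; only its being a basis of $T_pM$ matters, the null structure being relevant solely for the interpretation of $(\ref{diffboost})$ as a Lorentz-type boost.

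First I would construct the adapted chart. Starting from any chart centred at $p$ (so $p$ is the origin), the coordinate basis $\{\partial_{u^i},\partial_{v^i},\partial_{x^a}\}|_p$ differs from the frame $\{\bl^i,\bn^i,\bm^a\}$ by some invertible linear map $L\in GL(n,\mb{R})$; composing the chart with $L^{-1}$ (a linear diffeomorphism of $\mb{R}^n$, hence again a valid chart) yields coordinates with
\[
\frac{\partial}{\partial u^i}\Big|_p=\bl^i,\qquad \frac{\partial}{\partial v^i}\Big|_p=\bn^i,\qquad \frac{\partial}{\partial x^a}\Big|_p=\bm^a.
\]
This only imposes a first-order (pointwise) matching, so no integrability is needed and such a chart always exists. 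I then let $X=\sum_i\lambda_i(u^i\partial_{u^i}-v^i\partial_{v^i})$ and take $\varphi_\tau$ to be its flow, setting $\widetilde U:=\varphi_\tau(U)$ and using the same coordinate functions (written with tildes) on $\widetilde U$, so the coordinate frame at $p$ agrees in both systems.

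It remains to verify the two assertions. Solving the autonomous system $\dot u^i=\lambda_i u^i$, $\dot v^i=-\lambda_i v^i$, $\dot x^a=0$ generated by $X$ gives $u^i(\tau)=e^{\lambda_i\tau}u^i(0)$, $v^i(\tau)=e^{-\lambda_i\tau}v^i(0)$, $x^a(\tau)=x^a(0)$, which is exactly $(\ref{diffboost})$; hence $\varphi_\tau$ is the claimed one-parameter family and fixes the origin, $\varphi_\tau(p)=p$. Its differential in the adapted bases is the Jacobian $\operatorname{diag}(e^{\lambda_i\tau},e^{-\lambda_i\tau},1)$, and applying this to the coordinate vectors — which at $p$ equal the frame — gives $d\varphi_\tau(\bl^i)=e^{\lambda_i\tau}\bl^i$, $d\varphi_\tau(\bn^i)=e^{-\lambda_i\tau}\bn^i$, $d\varphi_\tau(\bm^a)=\bm^a$, i.e. precisely $(\ref{eq:frameboost})$.

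The computations are routine, so the only genuine care is in the set-up, and I do not expect a real obstacle. The one technical point is the domains: the scaling flow need not preserve a fixed neighbourhood, so I would shrink $U$ (and, if necessary, restrict the $\tau$-range) to ensure that $\widetilde U=\varphi_\tau(U)$ remains inside the coordinate chart. The conceptual content of the lemma is thus merely the conversion of the pointwise boost from invariant theory into an honest local diffeomorphism, and the proof is essentially bookkeeping built on the adapted-coordinate construction.
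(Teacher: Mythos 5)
Your proposal is correct and follows essentially the same route as the paper: choose a chart centred at $p$ whose coordinate basis matches the given frame at that single point (always possible by a linear change of chart), observe that the coordinate scaling $(\ref{diffboost})$ then has diagonal Jacobian $\mathrm{diag}(e^{\lambda_i\tau},e^{-\lambda_i\tau},1)$ and so induces the boost at $p$, and identify $X$ as the generator of this flow. You merely spell out the Jacobian computation and the ODE integration that the paper leaves implicit, and your remark about shrinking $U$ so that $\varphi_\tau(U)$ stays in the chart is a sensible (if minor) addition.
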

\begin{proof}
Choose a sufficiently small neighbourhood, $U$, around the point $p$. The boost eq.(\ref{eq:frameboost}) induces a transformation in the tangent space $T_pM$. Let $\phi: U\rightarrow \mb{R}^{2k+m}$ be a smooth map mapping the one-forms $\bl^i$ onto $du^i$ and $\bn^i$ onto $dv^i$, at $p$. This choice amounts to choosing a coordinate system where the coordinate vectors align with the $\bl^i$'s and $\bn^i$'s at $p$. Such a choice can always be made since $p$ is merely a point. The diffeomorphism (\ref{diffboost}) gives now the desired boost. The vector field $X$ can now be found by differentiation of $\varphi_\tau$ w.r.t. $\tau$. 
\end{proof}

We will use this boost to get a sufficient criterion of $\mathcal{I}$-degenerate metrics. We construct a metric: 
\[ \widetilde{g}_{\tau} =\varphi_\tau^*g.\]
\\
Given $U$, let $\mf{M}$ be the space of smooth metrics on $U$. We will make the following assumption:
\\
\\
{\it There exists a neighbourhood $U$ so that the metric, in the coordinates given, has a finite limit $\lim_{\tau\rightarrow\infty}\varphi_\tau^*g\in \mf{M}$ with a boost with respect to any given point $p\in U$.
}
\\
\\
This assumption places clear constraints on the possible metric. Let us consider these in detail. Now, the boost above is with respect to the origin of the coordinate system. We consider a neighbourhood, $U$, that is sufficiently small to be covered by one coordinate chart. Choose an arbitrary point $p\in U$, which is given by $(u_0^i,v_0^i,x^i_0)$. We shift this point to the origin, by introducing $(\bar{u}^i,\bar{v}^i,\bar{x}^a)=({u}^i-u^i_0,{v}^i-v_0^i,{x}^a-x^a_0)$, and then apply the above boost. The above assumption now implies that the corresponding limit should be finite for all constants $(u_0^i,v_0^i,x^i_0)$ in the neighbourhood $U$.
\\
\\
The metric consists of symmetric components $g_{\mu\nu}(u^i,v^i,x^a)dx^\mu dx^{\nu}$, which, after translation of the point $p$ to the origin of the coordinate system, turn into:
$g_{\mu\nu}(u_0^i+\bar{u}^i,v_0^i+\bar{v}^i,x_0^a+\bar{x}^a)d\bar{x}^\mu d\bar{x}^{\nu}$. With no loss of generality, we can assume the boost given is:
\[ (\bar{u}^i,\bar{v}^i,\bar{x}^a)\mapsto (e^{-\lambda_i\tau}\bar{u}^i,e^{\lambda_i\tau}\bar{v}^i,\bar{x}^a), \qquad \lambda_i>0, \] 
(we include all null-directions having $\lambda_j=0$ in $x^a$). First, we note that the components
\[ d\bar{v}^i d\bar{v}^j, \qquad d\bar{v}^id\bar{x}^a, \]
have to vanish on $U$. This can be seen as follows: if there is a point $p$ on $U$ for which we have
\[ f(p)d\bar{v}^i d\bar{x}^a \neq 0, \qquad \text{then} \qquad f(p)e^{\lambda_i\tau}d\bar{v}^i d\bar{x}^a\rightarrow \infty. \] 
Clearly, the same argument is valid for $d\bar{v}^i d\bar{v}^j$ as well. Next, consider the metric for the transversal space:
\[  g_{ab}d\bar{x}^ad\bar{x}^b.\] 
Since the metric is smooth (as well as the limit),  the partial derivative of $g_{ab}$ w.r.t. $\bar{v}^i$ exists for any $\tau$. Then, considering it as a function of $\bar{v}^i$:
\beq
 g(v^i_0+e^{\lambda_i\tau}\bar{v}^i)_{ab,\bar{v}^i}=g'(v^i_0+e^{\lambda_i\tau}\bar{v}^i)_{ab} e^{\lambda_i\tau}.
\eeq
Consequently, $g'$ has to be zero (same argument as above), and hence, the components $g_{ab}$ do not depend on the $\bar{v}^i$'s, and therefore, $g_{ab}=g_{ab}(\bar{u}^i,\bar{x}^a)$. 

For the components containing one or two $d\bar{u}^i$'s, we note by taking derivatives of various $\bar{v}^j$'s that they must be polynomials in the coordinates $(\bar{v}^i)$, but are arbitrary smooth functions in  $(\bar{u}^i, \bar{x}^a)$. The order of the polynomial (in the $\bar{v}^i$'s) depends on the actual boost, as well as which component we consider. 

In the following we will also for simplicity introduce some notation. Let $P(v_1,v_2,...,v_k)$ be a polynomial in the $v_i$'s with coefficients being arbitrary functions of $(u^i, x^a)$ (Henceforth, we will sometimes let the index of the $v$-coordinates be downstairs due to a more appealing typesetting). Define $\mathcal{P}$ the ring of all such polynomials: 
\[ \mathcal{P}:=\left\{P(v_1,v_2,...,v_k) ~~| ~~ P ~ \text{polynomial,  coefficients depend on}~ (u^i, x^a)\right\} \]
We will define subsets of this set and indicate them with a bracket $[-,..,-]$. The bracket consists of a list of monomials in $v_i$'s and indicates the highest allowable possible power of the $v_i$'s. For example, $[v_1^3,v_2v_3^5]\subset \mathcal{P}$ is the subset including the following powers: $v_1^n, n=0,...,3$, and $v_2^mv_3^q$, $m=0,1$ and $q=0,...,5$. 

We therefore, end up with the following metric (switching back to non-barred coordinates): 
\beq\label{result}
g=2du^i\left(a_{ij}dv^j+A_{ij}du^j+B_{ia}dx^a\right)+g_{ab}dx^adx^b,
\eeq
where $g_{ab}=g_{ab}(u^i,x^a)$ and $a_{ij}$,  $A_{ij}$, and $B_{ia}$, are polynomials belonging to some subset of $\mathcal{P}$, with  arbitrary smooth coefficients in $(u^i,x^a)$. We therefore conclude that, these metrics form a subclass of the metrics of type I considered in section \ref{sect:subclasses}. 

\subsection{The polynomial invariants}
These metrics represent $\mathcal{I}$-degenerate metrics in the sense that many non-diffeomorphic metrics have the same invariants. Indeed, 
\begin{thm}
\label{thm:varphi}
Consider a point $p\in U$, and assume that there is a one-parameter family of diffeomorphisms $\varphi_\tau$ such that $\varphi_\tau(p)=p$ and $\lim_{\tau\rightarrow\infty}\varphi_\tau^*g=g_0\in \mf{M}$. Then any polynomial curvature invariant of $g$ evaluated at $p$ is identical to the corresponding invariant of $g_0$ at $p$. 
\end{thm}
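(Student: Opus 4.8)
The plan is to exploit two standard features of polynomial curvature invariants: their \emph{naturality} (covariance under diffeomorphisms) and their \emph{continuity} as functions of the metric jet. Naturality will show that $I[\varphi_\tau^*g](p)$ does not actually depend on $\tau$ and always equals $I[g](p)$; continuity will then let me pass to the limit $\tau\to\infty$ and identify this fixed value with $I[g_0](p)$.

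First I would treat a single invariant $I\in\mathcal{I}$ at a time. Since $I$ is a full contraction of the Riemann tensor and finitely many of its covariant derivatives, it is a natural scalar: for any diffeomorphism $\varphi$ one has $I[\varphi^*g]=I[g]\circ\varphi$. Applying this with $\varphi=\varphi_\tau$, evaluating at the fixed point, and using the hypothesis $\varphi_\tau(p)=p$ gives
\[
I[\varphi_\tau^*g](p)=I[g]\bigl(\varphi_\tau(p)\bigr)=I[g](p)
\]
for every $\tau$, so $\tau\mapsto I[\varphi_\tau^*g](p)$ is constant with value $I[g](p)$.

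Next I would take the limit. The value $I[\widetilde g](p)$ of a fixed invariant is a universal rational expression in the components of $\widetilde g$, its inverse, and their partial derivatives up to some finite order $N$ depending only on $I$; the denominators are powers of $\det\widetilde g$ and hence stay bounded away from zero because the limit $g_0$ is again a bona fide (nondegenerate) metric. By hypothesis $\varphi_\tau^*g\to g_0$ in $\mf{M}$, so all partial derivatives up to order $N$ converge at $p$, and continuity of this rational expression yields
\[
\lim_{\tau\to\infty} I[\varphi_\tau^*g](p)=I[g_0](p).
\]
Comparing with the constancy from the previous step gives $I[g](p)=I[g_0](p)$, which is the claim.

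The one point needing care is the meaning of the convergence $\varphi_\tau^*g\to g_0$ in $\mf{M}$: the continuity step requires the metric together with its partial derivatives up to order $N$ to converge at $p$. This is automatic if $\mf{M}$ carries the $C^\infty$ topology (uniform convergence of all derivatives on compact neighbourhoods of $p$), which is the natural reading of ``the space of smooth metrics.'' Because each individual invariant involves only finitely many derivatives, the argument can be run separately for each $I$ with no need for any uniformity across $\mathcal{I}$; this is what makes the otherwise soft naturality-plus-continuity argument go through cleanly.
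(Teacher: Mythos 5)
Your argument is correct and is essentially the paper's own proof: both use naturality of curvature invariants under the diffeomorphism fixing $p$ to show $I[\varphi_\tau^*g](p)=I[g](p)$ for all $\tau$, then continuity of $I$ in the finite-order jet of the metric to pass to the limit and identify the value with $I[g_0](p)$. Your added remarks on the nondegeneracy of $g_0$ (controlling the $\det g$ denominators) and on reading the convergence in $\mf{M}$ as $C^\infty$ convergence of derivatives only make explicit points the paper treats implicitly.
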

\begin{proof}
At $p$, the diffeomorphism induces a frame transformation on the tangent space. Since any curvature invariant, $I$, does not depend on such a frame choice, the transformation of $I$ under $\varphi_\tau$ is simply:
\[\left.\varphi_{\tau}^*I\right|_{p}=\left.I\right|_{\varphi_\tau(p)}=\left.I\right|_{p}.\]
Since the metric is smooth, including its limit, any derivative of the metric, evaluated at $p$, $\left.\partial^{(n)}g\right|_p$ is well-defined in the limit as well. Consequently, since the invariants are continuous functions in $g$ and its derivatives, the limit implies:
\[ \lim_{\tau\rightarrow \infty}\left.\varphi_{\tau}^*I\right|_{p}=\left.I\right|_{p}=I\left[\lim_{\tau\rightarrow\infty} \left.g\right|_p\right]=\left.I[g_0]\right|_p.
\]
\end{proof}
We also note the following fact about any metric $g_0$ being the limit of such a boost:
\begin{prop}\label{prop:isometry}
Assume that $\lim_{\tau\rightarrow\infty}\varphi_\tau^*g=g_0\in \mf{M}$, where $\varphi_\tau$ is a one-parameter family of diffeomorphisms. Then $\varphi_\tau^*g_0=g_0$ and consequently, $\varphi_\tau$ is an isometry of $g_0$. 
\end{prop}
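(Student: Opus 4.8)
The plan is to exploit the group structure of $\varphi_\tau$. Although the statement only says ``one-parameter family of diffeomorphisms'', the relevant $\varphi_\tau$ is the flow generated by the vector field $X$ of Lemma \ref{Lemma}, so it is in fact a one-parameter \emph{group}: $\varphi_{\tau+\sigma}=\varphi_\tau\circ\varphi_\sigma$ for all $\tau,\sigma\in\mathbb{R}$. I would make this the load-bearing observation. Combined with the contravariance of pullback, $(\varphi\circ\psi)^*=\psi^*\circ\varphi^*$, the flow law gives the single identity I need,
\[
\varphi_\sigma^*\circ\varphi_\tau^*=(\varphi_\tau\circ\varphi_\sigma)^*=\varphi_{\tau+\sigma}^*,
\]
which says that precomposing the limiting family by a fixed $\varphi_\sigma$ merely shifts the parameter.

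With this in hand the computation is short. Fixing $\sigma$ and using that pullback by the fixed diffeomorphism $\varphi_\sigma$ may be passed through the limit, I would write
\[
\varphi_\sigma^*g_0=\varphi_\sigma^*\Big(\lim_{\tau\to\infty}\varphi_\tau^*g\Big)=\lim_{\tau\to\infty}\varphi_\sigma^*\varphi_\tau^*g=\lim_{\tau\to\infty}\varphi_{\tau+\sigma}^*g.
\]
Since $\sigma$ is held fixed, $\tau+\sigma\to\infty$ as $\tau\to\infty$, so the change of variable $\tau'=\tau+\sigma$ yields
\[
\lim_{\tau\to\infty}\varphi_{\tau+\sigma}^*g=\lim_{\tau'\to\infty}\varphi_{\tau'}^*g=g_0.
\]
Hence $\varphi_\sigma^*g_0=g_0$ for every $\sigma$, and since each $\varphi_\sigma$ is a diffeomorphism preserving $g_0$, it is by definition an isometry of $g_0$, which is the assertion.

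The step I expect to require the most care is the interchange $\varphi_\sigma^*\lim_\tau=\lim_\tau\varphi_\sigma^*$. For fixed $\sigma$, pullback by $\varphi_\sigma$ is a fixed smooth operation: by the chain rule the components of $\varphi_\sigma^*(\varphi_\tau^*g)$ are prescribed smooth combinations of the components of $\varphi_\tau^*g$ with the (parameter-independent) derivatives of $\varphi_\sigma$. Thus convergence of $\varphi_\tau^*g\to g_0$ in the topology under which $g_0\in\mf{M}$ is secured (uniform convergence of components and their derivatives on compact subsets) immediately forces convergence of the pullbacks and justifies moving $\varphi_\sigma^*$ inside the limit. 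I would therefore state explicitly that the hypothesis $\lim_{\tau\to\infty}\varphi_\tau^*g=g_0\in\mf{M}$ is understood in this $C^\infty$ sense, and flag the flow property of $\varphi_\tau$ as the essential structural input without which the parameter-shift argument, and hence the proposition, would fail.
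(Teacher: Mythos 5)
Your proof is correct and is essentially the paper's own argument: both rest on the flow law $\varphi_\sigma^*\varphi_\tau^*=(\varphi_\tau\circ\varphi_\sigma)^*=\varphi_{\tau+\sigma}^*$, pass the fixed pullback through the limit, and reindex the parameter. Your explicit justification of the limit interchange is a welcome addition the paper leaves tacit, but it does not change the route.
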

\begin{proof}
We observe that:
\[ \varphi_\tau^*g_0=\varphi_\tau^*\left(\lim_{\lambda\rightarrow\infty}\varphi_\lambda^*g\right)=
\lim_{\lambda\rightarrow\infty}(\varphi_\lambda\circ\varphi_\tau)^*g=\lim_{\lambda\rightarrow\infty}\varphi_{\lambda+\tau}^*g=\lim_{\lambda\rightarrow\infty}\varphi_\lambda^*g=
g_0.\] 
\end{proof}
This implies that the limiting metric has extra symmetries compared to $g$. We also note that if $g_0$ turns out to be $g$ (perhaps in some disguise), then the $g$ must possess the symmetry $\varphi_\tau$ as well: assume that there is a diffeomorphism $f$ so that $f^*g=g_0$. Then by applying $\varphi_\tau^*$ on each side, we get:
\[ \varphi_\tau^*f^*g=\varphi_{\tau}^*g_0=g_0=f^*g. \] 
Then applying $f^{-1}$ on each side we obtain:
\[ (f^{-1})^*\varphi_\tau^*f^*g=(f\circ\varphi_\tau\circ f^{-1})^*g=(f^{-1})^*f^*g=(f\circ f^{-1})^*g=g.\] 
Hence, $f^{-1}\circ\varphi_\tau\circ f$ is an isometry of $g$. In this case, the $g$ and $g_0$ are diffeomorphic so we might as well just use the (possibly) simpler metric, $g_0$, to represent our space. Clearly, this means that as long as $\varphi_\tau$ (or $f^{-1}\circ\varphi_\tau\circ f$) is not an isometry of $g$, then $g$ and $g_0$ must necessarily be two non-diffeomorphic metrics. In this sense, the existence of $\varphi_\tau$ implies that the metric $g$ (and its curvature structure) is $\mathcal{I}$-\emph{degenerate}.
\\
\\
We now want to take a closer look at the coefficients of the degenerate metrics. Because of the $v^ i$'s transformation properties under the boost the coefficients $a_{ij}, A_{ij}$ and $B_{ia}$ cannot be polynomials in $v^ i$ of an arbitrarily high degree. 

\section{Constraining the coefficients of $\mathcal{I}$-degen\-erate metrics}
Given the null frame and boost in $(\ref{eq:frameboost})$, the form of the $\mathcal{I}$-degen\-erate metrics is:\\
\beq
g=2du^i\left(a_{ij}dv^j+A_{ij}du^j+B_{ia}dx^a\right)+g_{ab}dx^adx^b,
\eeq
where the coefficients $a_{ij}, A_{ij}$ and $B_{ia}$ are polynomials in the $v^i$'s. In order for the limit, $\lim_{\tau\rightarrow\infty}\varphi_\tau^*g\in \mf{M}$ to exist, the coefficients cannot be polynomials of arbitrarily high order in the $v^i$'s. This is because the $v^i$'s transform as $e^{\tau \lambda_i}v^i$ under boosts and the limit might blow up. We now use the boost-weight decomposition on the metric to get a handle on the $v^i$ dependence of the coefficients. 
\\
\\
The vector space decomposition separates the metric into (coordinate)-components that transform as $g_{\mu\nu} \rightarrow e^{(b_1 \lambda_1 + b_2 \lambda_2 + ... + b_k \lambda_k)}g_{\mu\nu}$ under the action of $\varphi_\tau$ (similarly to eq.(\ref{eq:T_b}) but for a coordinate basis). This is therefore a useful point of view when analyzing the limiting behaviour of the components. In addition to the components we must include the behaviour of the differentials $du^idu^j$, $du^idv^j$ and $du^idx^a$ under boosts. This motivates the following definition:
\\
\\
$\boldsymbol{V} = (d_1,d_2,...,d_k) + \bold{v}_{ij}\in\mathbb{Z}^k$, 
\\
\\
The $d_i$'s are non-negative integers corresponding to the subsets $[v_1^{d_1}v_2^{d_2}...v_k^{d_k}]\subset\mathcal{P}$ which we assume the polynomials $a_{ij},A_{ij}$ or $B_{ia}$ belong to. Differentials are accounted for by $\bold{v}_{ij}$ as follows:  For each $i,j$, the $\bold{v}_{ij}$ is a vector given as follows (other components zero):
\begin{itemize}
\item{} For component $a_{ij}$: $-1$ at the $i$'th place (the  $du^i$ differential) and $+1$ at the $j$'th place (the  $dv^j$ differential). If $i=j$ then $\bold{v}_{ij}=0$.
\item{} For component $A_{ij}$: $-1$ at the $i$'th place (the  $du^i$ differential) and $-1$ at the $j$'th place (the  $du^j$ differential). If $i=j$ then $-2$ at the $i$'th place.
\item{} For component $B_{ia}$: $-1$ at the $i$'th place (the  $du^i$ differential). 
\end{itemize}
The vector $\boldsymbol{V}$ is thus different depending on which components of the metric we are looking at. If we pick out the term $2a_{ij} du^i dv^j$ then:
\\
\\
$\boldsymbol{V} = (d_1,d_2,...,d_i-1,...,d_j+1,...,d_k)$.
\\
\\
Demanding that the limit from section \ref{sect:deg} is finite, gives a bound $\boldsymbol{V} \cdot \boldsymbol{b} \leq 0$, where $\boldsymbol{b}$ is the boost representing $\cal{X}$ in eq.(\ref{eq:limit}). Recall that the components of $\boldsymbol{b}$ are zero or positive. The bound $\boldsymbol{V} \cdot \boldsymbol{b} \leq 0$ ensures we get a well behaved limit and the given inequality constrains the maximal degree of the polynomials $[v_1^{d_1}v_2^{d_2}...v_k^{d_k}]$.
\\
\\
Before we show a concrete example we will do two things. Firstly, we are only interested in the maximum degree of the $v^i$'s and so we set $\boldsymbol{V} \cdot \boldsymbol{b} = 0$. Secondly, for a given\footnote{It is the $k$ from the dimension of the manifold: $2k +m$ with signature $(k,k+m)$} $k$ we have a freedom of choice of how we would like to specify the boost-vectors.
\\
\\
We write the boost vector $\boldsymbol{b}$ in the form $\boldsymbol{b}=(n_1, n_2,...,n_k) \in \mathds{Z}^k$. By utilising a linear transformation from $\mathds{Z}^k \rightarrow \Gamma$, where $\Gamma$ is a lattice in $\mathds{R}^k$, we will consider the cases where we can put the following conditions on the entries of the boost vectors\footnote{We should point out that there may be other possibilities for the boost vector ${\mbold b}$; for example, we can generalise point 2. to include $n_i<n_{i+1}\leq 2n_i$. A complete list of all non-equivalent boosts is not given here.}:\begin{enumerate}
\item $n_i \leq n_{i+1}$, 
\item If $n_i < n_{i+1}~\textit{then}~n_{i+1} = 2n_i$
\item For the case $n_i = 0$ then $n_{i+1} = 0$ or 1 
\end{enumerate}
For a given $k$ there are then $2^k$ different boost vectors $\boldsymbol{b}$ in this category. 
\\
\\
\subsubsection*{Examples:}

For $k=2$, the complete set of boost vectors $\boldsymbol{b}$ are $(0,0)$, $(0,1)$, $(1,1)$ and $(1,2)$.\\
\\
For $k=3$, the complete set of boost vectors $\boldsymbol{b}$ in this category are $(0,0,0)$, $(0,0,1)$, $(0,1,1)$, $(0,1,2)$, $(1,1,1)$, $(1,1,2)$, $(1,2,2)$, $(1,2,4)$. \\

Furthermore, in the boost-vectors with a leading zero, e.g., $(0,0,1)$, the zeros indicate that the boost does not involve these directions. This implies that there are no constraints on these variables and no degeneracy in these directions. Consequently, these directions can be included in the transverse space (and hence, generalising the transverse space and allowing it to be pseudo-Riemannian as well). In the case where there are only zeros, $(0,...,0)$, this corresponds to the $\mathcal{I}$-non-degenerate case where no boost exists.

\subsection{A concrete example}

Suppose we set $k = 3$, specify the boost-vector to be $\boldsymbol{b} = (1,2,4)$ and decide to pick out terms in front of $du^1 du^1$. Then $\boldsymbol{V} = (d_1-2,d_2,d_3)$. Writing out the dot product $\boldsymbol{V} \cdot {\mbold b} = 0$, we get:
\[ d_1 + 2d_2 + 4 d_3 = 2\]
From this equality we gather that the $v_i$-dependence in front of $du^1 du^1$ is restricted to the two cases: $(v_1^2,v_2^0,v_3^0)$ or $(v_1^0,v_2^1,v_3^0)$. Hence, $A_{11}\in [v_1^2,v_2]$. If we do this for all the metric components $A_{ij}=A_{ji}$ we get:
\begin{enumerate}[i)]
\item For $A_{11}$: 
$d_1 + 2d_2 + 4d_3 = 2 $
\item For $A_{12}$: 
$d_1 + 2d_2 + 4d_3 = 3 $
\item For $A_{22}$: 
$d_1 + 2d_2 + 4d_3 = 4 $
\item For $A_{13}$: 
$d_1 + 2d_2 + 4d_3 = 5 $
\item For $A_{23}$: 
$d_1 + 2d_2 + 4d_3 = 6 $
\item For $A_{33}$: 
$d_1 + 2d_2 + 4d_3 = 8 $
\end{enumerate}
At the end we can gather this information and write up the maximum degree of these metric coefficients in matrix form.
\\
\\
\[ A_{ij} =
 \begin{pmatrix}
  \left[v_1^2,v_2\right] & \left[v_1^3,v_1v_2\right] & \left[v_1^5,v_1^3v_2,v_1v_2^2,v_3v_1\right]  \\
  \cdots & \left[v_1^4,v_1^2v_2,v_2^2, v_3\right] & \left[v_1^6,v_1^4v_2,v_1^2v_2^2,v_2^3,v_3v_1^2, v_3v_2\right]  \\
  \cdots  & \cdots  & \left[v_1^8,v_1^6v_2,v_1^4v_2^2,v_1^2v_2^3,v_2^4,v_3v_1^4,v_3v_1^2v_2,v_3v_2^2,v_3^2\right]   
 \end{pmatrix}
\]
 \\
 \\
For the matrix $a_{ij}$ and the vector $B_{ia}$ the result is: 
\[ a_{ij} =
 \begin{pmatrix} 1 & 0 & 0 \\ 
  \left[v_1\right] & 1 & 0  \\
  \left[v_1^3,v_1v_2\right] & \left[v_1^2,v_2\right] & 1 
 \end{pmatrix}, \quad B_{ia}=\begin{pmatrix} 
  \left[v_1\right] \\
  \left[v_1^2,v_2\right] \\
\left[v_1^4,v_1^2v_2,v_2^2,v_3\right]  
 \end{pmatrix}^T
\]
Note, however, we still have some freedom given in eq.(\ref{v-trafo}) to simplify the matrix $a_{ij}$. Using this transformation we can simplify $a_{ij}$ to be: 
\[ a_{ij} =
 \begin{pmatrix} 1 & 0 & 0 \\ 
  0 & 1 & 0  \\
  \left[v_1v_2\right] & \left[v_1^2\right] & 1 
 \end{pmatrix}.
\]

\subsection{General observations}
\label{General observations} 
A list of equalities for the matrix $A_{ij}$ and for our class of boost-vectors up to dimension $k=4$ can be found in appendix B. There are a few general observations. 
\paragraph{The case $k=1$: Kundt case.} We note that for $k=1$ there is only one possible non-trivial boost vector. All of these cases reduce to Kundt metrics for which $a_{ij}$ and $A_{ij}$ have only one component each: $a_{11}=1$, and $A_{11}=([v_1^2])$, and $B_{1a}=([v_1])$. These metrics are therefore consistent with the previous analysis of these metrics \cite{kundt}. 

\paragraph{The case $k=2$: Kundt or type II}  There are three non-trivial cases here, ${\mbold b}=(0,1), ~(1,1)$ and $(1,2)$. The first case can be considered as a $k=1$ by allowing  the transverse metric be pseudo-Riemannian (i.e., $g_{ab}dx^adx^b$ is pseudo-Riemannian). Hence, this is a Kundt case. For the cases $(1,1)$ and $(1,2)$, we note that for both we can reduce the matrix $a_{ij}=\delta_{ij}$. This is  the type II case in section \ref{sect:subclasses}. 
\\
\\
In 4 dimensions, this is the neutral case where there are no components $B_{ia}$. Consequently, in 4 dimensions both of these cases must also be Walker cases, type III. Hence, this is in agreement with the results found in \cite{VSI2}. 

\paragraph{Cases $(1,...,1)$: type II.} In all of the cases where the boost vector is $(1,...,1)$, all the $v_i$'s carry the same boost vector. This means that the matrix $a_{ij}=\delta_{ij}$, and hence of type II. Furthermore, the matrix $A_{ij}$ can at most be quadratic in $v_i$, and $B_{ia}$ at most linear in $v_i$. 
\\
\\
In the special case where the space is neutral of dimension $2k$, then $B_{ia}$ is not present and hence, the space is Walker (type III).

\paragraph{Covariantly constant ${\mbold F}$: type IV.} This case is a subclass of the Walker spaces. If we assume Walker, then this case is equivalent to the additional requirement: 
\[ \sum_{i}\partial_iA_{ij}=0. \]
The examples of  metrics obeying this condition are plentiful. An example is (all indices are written downstairs to avoid clutter):
\[
ds^2=2du_1(dv_1+v_2du_1)+2du_2(dv_2+v_3du_2)+2du_3(dv_3+v_1^8du_3).
\] 
The 3-form: 
\[ \boldsymbol{F}=du_1\wedge du_2\wedge du_3 \] 
is for this metric covariantly constant (and hence, is a null Killing-Yano tensor).\footnote{The fact that $\boldsymbol{F}$ satisfies Killing-Yano equations for this type of metrics has been spotted by using a systematic algorithm developed in \cite{KYsearch}.}
\subsection{The limiting spaces} 
As pointed out, applying the diffeomorphism $\varphi_\tau$ gives a space with identical invariants, including the limiting space as $\tau\rightarrow \infty$. With respect to a point $p$ which we can assume has coodinates $(v_i,u^i, x^a)=(0,u_0^i,x^a_0)$, the subleading powers of $v_i$ will tend to zero as $\tau\rightarrow \infty$. For example, if we consider the ${\mbold b}=(1,2,4)$ case, then choosing the $A_{23}$ component ($A_{23}\in \left[v_1^5,v_1^3v_2,v_1v_2^2,v_3v_1\right]$)
\[ \varphi^*_{\tau}A_{23}du^2du^3 \longrightarrow (av_1^5+bv_1^3v_2+cv_1v_2^2+dv_3v_1)du^2du^3,\]
where $a,b,c,d$ are functions of $(u^i_0,x^i)$. Note that in the limit, the coordinates $u^i_0$ tend towards a constant and in evaluating the invariants, one needs to keep $u^i_0$ fixed while the $x^a$ remains unaffected. Note also that the limit itself is symmetric w.r.t. $\varphi_\tau^*$ and thereby confirming Prop. \ref{prop:isometry}.  

\subsubsection{VSI spaces.} An interesting subclass of these spacetimes is the class where all polynomial curvature invariants vanish. Such spacetimes are those for which there exists a boost ${\mbold b}'$ so that the corresponding diffeomorphism $\varphi'_\tau$ has flat space as a limit: $ {\varphi'}_{\tau}^*g \longrightarrow \text{flat space}. $
This would contain all the spaces above where the polynomials are all subleading order. However, it also includes those spaces for which there exists a perturbation $\mbold \epsilon$ of ${\mbold b}$ so the boost ${\mbold b}'={\mbold b}+{\mbold\epsilon}$ gives flat space in the limit. 

Note also, that if there is a sequence of such boost limits which eventually leads to flat space, then this is sufficient to prove the space is VSI as well.  Therefore, a space is a VSI if there exists a sequence of boosts such that:
\[ g\longrightarrow g_1\longrightarrow \cdots \longrightarrow \text{flat space}.\]
Each arrow indicates an infinite boost limit.  

As an example of this is the following space:
\[
g=2du_1(dv_1+v_2du_1)+2du_2(dv_2+v_3du_2)+2du_3(dv_3+v_1^7du_3).
\] 
By using the boost ${\mbold b}_1=(1,2,4)$, the limiting space is:
\[ g_1=2du_1(dv_1+v_2du_1)+2du_2(dv_2+v_3du_2)+2du_3dv_3.\] 
Using next the boost ${\mbold b}_2=(1,1,0)$, gives the limit: 
\[ g_2=2du_1dv_1+2du_2dv_2+2du_3dv_3,\]
which is flat space. This proves that the  the metric $g$ is a VSI. 

\subsubsection{CSI spaces.} Another subclass of metrics are those that have polynomial invariants being all constants. Such spaces can be found by considering sequence of limits having a homogeneous space as an end product: 
\[ g\longrightarrow g_1\longrightarrow \cdots \longrightarrow \text{homogeneous space}.\]
All spaces having such a sequence of limits will be a CSI. 

For spaces where $k\geq 2$, a sequence of limits ending at a homogeneous space is sufficient but not necessary for it to be a CSI. For example, the 4-dimensional space (of boost-type (1,2))
\[ 2du_1(dv_1+v_2du_2)+2du_2(dv_2+v_1^4du_2),\] 
has only 3 Killing vectors, and hence, is not a homogeneous space. Yet, it is still a CSI space (and cannot be simplified further by a limiting procedure). One would like to have a criterion for a CSI space which states that a (degenerate) space is CSI if and only if there is a sequence: 
\[ g\longrightarrow g_1\longrightarrow \cdots \longrightarrow g_{\infty}. \]
However, it is not clear what the metric of $g_{\infty}$ is, but there are some conditions that it has to satisfy. Firstly, the transverse space needs to be a homogeneous space. Second, the additional null-directions will possess (at least) $k$ translations and one boost as isometries: hence,  $g_{\infty}$ will possess a minimum of $(k+m+1)$ Killing vectors (a homogeneous space has at least $(2k+m)$ Killing vectors, hence, $g_\infty$ need not be homogeneous). 

Examples of such CSI spaces are spaces having a Lie group $G$ as a transversal space equipped with a left-invariant metric. If ${\mbold \omega}^a$ are left-invariant 1-forms on the group $G$, then $g_{ab}{\mbold \omega}^a{\mbold \omega}^b$, where $g_{ab}$ is a constant matrix, is a left-invariant metric on $G$. Furthermore, if the leading order coefficients of the polynomials in the $v_i$'s (saturating the limits in ${\mbold V}\cdot {\mbold b}=0$), in $a_{ij}$ and $A_{ij}$ are constants, as well as the coefficients of the matrix $B_{ia}{\mbold \omega}^a$ (using the left-invariant 1-forms as basis 1-forms), then it is a CSI. 

As an 8-dimensional example, let 
\[ {\mbold\omega}^1=dw, \quad {\mbold \omega}^2=e^{2w}[dx +(1/2)(ydz-zdy)],\quad {\mbold \omega}^3 = e^wdy, \quad {\mbold \omega}^4=e^wdz, \] 
(or any left-invariant one-forms on a 4-dimensional Lie group)
and $g_{ab}$ be any constant matrix. Let also 
${\mbold\eta}$ be any (constant) linear combination of the ${\mbold \omega}^a$'s, i.e., 
\[ {\mbold \eta}=a{\mbold \omega}^1+b{\mbold \omega}^2+c{\mbold \omega}^3+d{\mbold \omega}^4.\]
So an example of an 8-dimensional CSI is:
\beq
g&=&
2du_1[dv_1+(a_1v_2+a_2(u_i,x_a))du_1+(b_1v_1+b_2(u_i,x_a)){\mbold\eta}_1]\nonumber \\ & +&2du_2[dv_2+(c_1v^4_1+c_2(u_i,x_a)v^3_1)du_2+(d_1v_1^2+d_2(u_i,x_a)v_1+d_3(u_i,x_a)){\mbold\eta}_2]\nonumber \\
&+& g_{ab}{\mbold\omega}^a{\mbold\omega}^b,
\eeq
where all are constants except when $(u_i,x_a)$-dependence is explicitly mentioned (more non-zero metric functions are possible though, only some are included here). In this case the limit is: 
\beq
2du_1[dv_1+a_1v_2du_1+b_1v_1{\mbold\eta}_1]+2du_2[dv_2+c_1v^4_1du_2+d_1v_1^2{\mbold\eta}_2]\nonumber 
+ g_{ab}{\mbold\omega}^a{\mbold\omega}^b,
\eeq
which is a homogeneous space. A plethora of other examples of CSI spaces can be found using the same procedure. 

\section{Conclusion}
In this paper we have discussed pseudo-Riemannian spaces with degenerate curvature stucture. Under a simple assumption we found a class of metrics being $\mathcal{I}$-degenerate, eq. (\ref{result}). This class includes all known examples to date, as well as new families of examples showing that this class is bigger than previously known. 

Examples of VSI and CSI spaces have been given, as well as metrics with more special curvature properties. For example, contained in the class, there are the metrics of type IV which possess a covariantly constant null $k$-form $F$. Clearly, there are also subtypes of the main types listed here which are amenable for further study. 

A  question is still lingering: are these all such $\mathcal{I}$-degenerate spaces? This question depends on the following  crucial assumption:
{\it For any point $p\in U$, there exists a (non-trivial) boost so the metric, in the coordinates given, has a finite limit $\lim_{\tau\rightarrow\infty}\varphi_\tau^*g\in \mf{M}$.
}
From invariant theory, we know that such a limit exists point-wise \cite{RS,VSI2}, however, the extension to a neighbourhood, $U$, is unsettled. 

\section*{Acknowledgement}
KY would like to thank Tsuyoshi Houri for helpful discussions and for making his work available to us before publication, which eventually led us to the class of spaces we have considered in this article. The same author is also grateful to University of Stavanger for the support. This work was partly supported by the JSPS Grant-in-Aid for Scientific Research No. 26$\cdot $1204.

\appendix
\section{Connection 1-forms}

We start from the canonical frame $\{ \boldsymbol{e}^{\mu } \}  = \{ \boldsymbol{\ell }^i , \boldsymbol{m}^a , \boldsymbol{n}^{\hat{i}} \}$ constructed in section 2. The metric is written as
\begin{equation}
ds^2 = g_{\mu \nu } \boldsymbol{e}^{\mu } \boldsymbol{e}^{\nu } = 2\delta _{i \hat{j}} \boldsymbol{\ell }^i \boldsymbol{n}^{\hat{j}} + \eta _{ab} \boldsymbol{m}^a \boldsymbol{m}^b \ .
\end{equation}
The components of connection 1-forms in this frame
\begin{equation}
\omega ^{\lambda }_{\ \mu \nu } \boldsymbol{e}^{\nu } = \nabla _{\boldsymbol{e}_{\mu }} \boldsymbol{e}^{\lambda }
\end{equation}
is related to the exterior derivatives of the canonical frame
\begin{equation}
d\boldsymbol{e}^{\lambda } = -\frac{1}{2} C^{\lambda }_{\ \mu \nu } \boldsymbol{e}^{\mu } \wedge \boldsymbol{e}^{\nu } 
\end{equation}
by
\begin{equation}
\omega _{\lambda \mu \nu } = g_{\lambda \rho }\omega ^{\rho }_{\ \mu \nu }= \frac{1}{2} \left( g_{\mu \rho }C^{\rho }_{\ \lambda \nu } + g_{\nu \rho } C^{\rho }_{\ \lambda \mu } - g_{\lambda \rho }C^{\rho }_{\ \mu \nu } \right) \ .
\end{equation}
For instance, 
\begin{equation}
\begin{split}
& \omega ^k_{\ \hat{i} \hat{j}} = \omega _{\hat{k}\hat{i}\hat{j}} = \frac{1}{2} \left( C^i_{\ \hat{k} \hat{j}} + C^j_{\ \hat{k}\hat{i}} - C^k_{\ \hat{i} \hat{j}} \right) \ , \\
& \omega ^k_{\ \hat{i}a} = \omega _{\hat{k}\hat{i}a} = \frac{1}{2}\left(  C^i_{\ \hat{k} a} \pm  C^a_{\ \hat{k}\hat{i}} - C^k_{\ a \hat{i}} \right) \ .
\end{split}
\end{equation}
Hence from the condition $d\boldsymbol{\ell }^i = \boldsymbol{\sigma }^i_{\ j} \wedge \boldsymbol{\ell }^j $ alone, one can derive
\begin{equation}
 \omega ^k_{\ \hat{i} \hat{j}} = 0 \ , \quad \omega ^k_{\ \hat{i}a} = \pm \frac{1}{2}C^a_{ \ \hat{k}\hat{i}}
 \end{equation}
 and see that
 \begin{equation}
 \omega ^j_{\ \hat{i}a} - \omega ^i_{\ \hat{j}a} = 0 \quad \Leftrightarrow \quad \omega ^i_{\ \hat{j}a} = 0 \ .
 \end{equation}
 
Let us now be more specific and take the metric and coordinate 
\begin{equation}
ds^2 = 2du^{ i  } \left( a_{ i   j  }dv^{ j  } + A_{ i   j  }du^{ j  } + B_{ i  a} dx^a \right) + g_{ab} dx^a dx^b \ .
\end{equation}
As the canonical frame, one can choose
\begin{equation}
\begin{split}
& \boldsymbol{\ell }^{ i  } = du^{ i  } \ , \quad \boldsymbol{m}^a = e^a_{\ b} dx^b \ , \quad g_{ab} = \eta _{cd} e^c_{\ a} e^d_{\ b} \ , \\
& \boldsymbol{n}^{\hat{ i  }} = a_{ i   j  } dv^{ j  } + A_{ i   j  } du^{ j  } + B_{ i  a}dx^a \ . 
\end{split}
\end{equation}
Exterior derivatives now satisfy
\begin{equation}
d\boldsymbol{\ell }^{ i  } = 0 \  , \quad d\boldsymbol{n}^{\hat{ i  }} = -\frac{1}{2} C^{\hat{ i  }}_{\ \mu\nu}\boldsymbol{e}^\mu \wedge \boldsymbol{e}^\nu \ , \quad d\boldsymbol{m}^a = -\frac{1}{2}C^a_{~\mu b} \boldsymbol{e}^\mu \wedge \boldsymbol{m}^b \ .
\end{equation}
In the present setting, one can classify the connection components into 18 groups. 
\begin{equation}
\begin{split}
\omega _{ i   j   l  } = & \frac{1}{2} \left( g_{ j  \hat{ j  }}C^{\hat{ j  }}_{\  i   l  } + g_{ l  \hat{ l  }}C^{\hat{ l  }}_{\  i   j  } - g_{ i  \hat{ i  }}C^{\hat{ i  }}_{\  j   l  } \right) \ , \\
\omega _{ i  \hat{ j  } l  } = & \frac{1}{2} \left( g_{ l  \hat{ l  }} C^{\hat{ l  }}_{\  i  \hat{ j  } } - g_{ i  \hat{ i  }}C^{\hat{ i  }}_{\ \hat{ j  }  l  } \right) \ , \quad \omega _{ i   j  \hat{ l  }} =  \frac{1}{2} \left( g_{ j  \hat{ j  }}C^{\hat{ j  }}_{\  i  \hat{ l  }} - g_{ i  \hat{ i  }}C^{\hat{ i  }}_{\  j  \hat{ l  }} \right) \ , \\
\omega _{ i  a  l  } = & \frac{1}{2}\left( g_{ l  \hat{ l  }}C^{\hat{ l  }}_{\  i  a} - g_{ i  \hat{ i  }}C^{\hat{ i  }}_{\ a l  } \right) \ , \quad \omega _{ i   j  a} = \frac{1}{2} \left( g_{ j  \hat{ j  }}C^{\hat{ j  }}_{\  i  a} - g_{ i  \hat{ i  }}C^{\hat{ i  }}_{\  j  a} \right)  \ , \\
\omega _{\hat{ i  }\hat{ j  } l  } = & \frac{1}{2} g_{ l  \hat{ l  }}C^{\hat{ l  }}_{\ \hat{ i  } \hat{ j  }} = - \omega _{ l  \hat{ i  } \hat{ j  }}  \ , \\
\omega _{\hat{ i  }a l  } = & \frac{1}{2} g_{ l  \hat{ l  }} C^{\hat{ l  }}_{\ \hat{ i  }a} = \omega _{ l  a\hat{ i  }} = -\omega _{ l  \hat{ i  }a} \  , \\
\omega _{ab l  } = & \frac{1}{2} \left( g_{bb}C^b_{\ a l  } + g_{ l  \hat{ l  }}C^{\hat{ l  }}_{\ ab} - g_{aa}C^a_{\ b l  } \right) \ , \\
\omega _{\hat{ i  }\hat{ j  } \hat{ l  }} = & \omega _{\hat{ i  }a\hat{ l  }} = \omega _{\hat{ i  }\hat{ j  }a} = 0 \ , \\
\omega _{ab\hat{ l  }} = & \frac{1}{2} \left( g_{bb}C^b_{\ a\hat{ l  }} - g_{aa} C^a_{\ b\hat{ l  }} \right) \ , \quad \omega _{\hat{ i  }ab} = \frac{1}{2} \left( g_{aa}C^a_{\ \hat{ i  }b}+ g_{bb}C^b_{\ \hat{ i  }a} \right) \ , \\
\omega _{ i  ab} = & \frac{1}{2} \left( g_{aa}C^a_{\  i  b} + g_{bb}C^b_{\  i  a} - g_{ i  \hat{ i  }}C^{\hat{ i  }}_{\ ab} \right) \  , \\
\omega _{abc} = & \frac{1}{2} \left( g_{bb}C^b_{\ ac} + g_{cc}C^c_{\ ab} - g_{aa}C^a_{\ bc} \right) \ . 
\end{split}
\end{equation}

We are mostly interested in covariant derivatives of $\boldsymbol{\ell }^{ i  }$ that are characterised by $\omega _{\hat{ i}  jk}$.
In particular, for the null k-form $\boldsymbol{F}$, we have
\begin{equation}
\begin{split}
\nabla _{\mu } \boldsymbol{F} = & \nabla _\mu \boldsymbol{\ell }^1 \wedge \boldsymbol{\ell }^2 \wedge \cdots \wedge \boldsymbol{\ell }^k \\
= &  \omega ^{ i  }_{\ \mu  i  } \boldsymbol{\ell }^1 \wedge \boldsymbol{\ell }^2 \wedge \cdots \wedge \boldsymbol{\ell }^k \\
+& \omega ^1_{\ \mu \hat{ i  }}\boldsymbol{n}^{\hat{ i  }}\wedge \boldsymbol{\ell }^2  \wedge \cdots \wedge \boldsymbol{\ell }^k + \cdots + (-1)^{k-1}\omega ^k_{\ \mu \hat{ i  }}\boldsymbol{n}^{\hat{ i  }} \wedge \boldsymbol{\ell }^1 \wedge \cdots \wedge \boldsymbol{\ell }^{k-1} \\
+&  \omega ^1_{\ \mu a}\boldsymbol{m}^{a}\wedge \boldsymbol{\ell }^2  \wedge \cdots \wedge \boldsymbol{\ell }^k + \cdots + (-1)^{k-1}\omega ^k_{\ \mu a}\boldsymbol{m}^{a} \wedge \boldsymbol{\ell }^1 \wedge \cdots \wedge \boldsymbol{\ell }^{k-1} \ .
\end{split}
\end{equation}
Therefore, $\nabla \boldsymbol{F}$ contains all the information for $\omega ^i_{\ \mu \hat{j}}$ and $\omega ^i_{\ \mu a}$. 
Among the connection coefficients appearing in the above formula, some of them are already zero (by the surface-forming conditions \ref{cond:1} and \ref{cond:2}).
The relevant non-vanishing ones are
\begin{equation}
\begin{split}
\omega _{\hat{ j  } i   j  } = & -\frac{1}{2} g_{ j  \hat{ j  }}C^{\hat{ j  }}_{\  i  \hat{ j  }} \ , \\
\omega _{\hat{ i  } j  \hat{ l  }} = & \frac{1}{2}g_{ j  \hat{ j  }}C^{\hat{ j  }}_{\ \hat{ i  }\hat{ l  }} \ , \\
\omega _{\hat{ i  } j  a} = & \frac{1}{2}g_{ j  \hat{ j  }}C^{\hat{ j  }}_{\ \hat{ i  }a} \ , \\
\omega _{\hat{ i  }ab} = & \frac{1}{2} \left( g_{aa}C^a_{\ \hat{ i  }b} + g_{bb}C^b_{\ \hat{ i  }a} \right) \ .
\end{split}
\end{equation}
To express them in terms of the metric components, one need compute $d\boldsymbol{n}^{\hat{ i  }}$ and $d\boldsymbol{m}^a$. 
Noting that
\begin{equation}
\begin{split}
dx^a = & \left( e^{-1} \right) ^a_{\ b}\boldsymbol{m}^b \ , \\
dv^{ j  } = & \left( a^{-1} \right) _{ j   l  } \boldsymbol{n}^{\hat{ l  }} - \tilde{A}_{ j   l  }\boldsymbol{\ell }^{ l  } - \tilde{B}_{ i  a}\left( e^{-1}\right) ^a_{\ b}\boldsymbol{m}^b \ , 
\end{split}
\end{equation}
where
\begin{equation}
\tilde{A}_{ i   j  } = \left( a^{-1}\right) _{ i   l  }A_{ l   j  } \ , \quad \tilde{B}_{ i  a} = \left( a^{-1} \right) _{ i   j  }B_{ j  a} \ , 
\end{equation}
we obtain
\begin{equation}
\begin{split}
d\boldsymbol{n}^{\hat{ i  }} = & \left[ \partial _{u^{ j  }}A_{ i   l  } - \left( \partial _{u^{ j  }}a_{ i   n  } - \partial _{v^{ n  }}A_{ i   j  }\right) \tilde{A}_{ n   l  } + \left( \partial _{v^{ n  }}a_{ i   m  } \right)\tilde{A}_{ n   j  } \tilde{A}_{ m   l  } \right]  \boldsymbol{\ell }^{ j  }\wedge \boldsymbol{\ell }^{ l  } \\
&+ \Big[ \left( \partial _{u^{ j  }}a_{ i   l  } - \partial _{v^{ l  }}A_{ i   j  } \right) \left( a^{-1} \right) _{ l   n  } \\
& \quad + \left( \left( a^{-1}\right) _{ m   n  }\tilde{A}_{ l   j  } - \left( a^{-1}\right) _{ l   n  }\tilde{A}_{ m   j  }\right) \partial _{v^{ m  }}a_{ i   l  } \Big] \boldsymbol{\ell }^{ j  }\wedge \boldsymbol{n}^{\hat{ n  }} \\
& + \Big[ \partial _{u^{ j  }}B_{ i  a}-\left( \partial _{x^a} - \tilde{B}_{ l  a}\partial _{v^{ l  }} \right) A_{ i   j  } \\
& \quad  -\left( \tilde{B}_{ l  a}\left( \partial _{u^{ j  }} - \tilde{A}_{ n   j  }\partial _{v^{ n  }} \right) + \tilde{A}_{ l   j  }\tilde{B}_{ n  a}\partial _{v^{ n  }} \right) a_{ i   l  } \Big] \left( e^{-1}\right) ^a_{\ b}\boldsymbol{\ell }^{ j  }\wedge \boldsymbol{m}^b \\
& + \left( \partial _{v^{ j  }}a_{ i   l  } \right) \left( a^{-1}\right) _{ j   n  } \left( a^{-1} \right) _{ l   m  } \boldsymbol{n}^{\hat{ n  }}\wedge \boldsymbol{n}^{\hat{ m  }} \\
& + \Big[ \left( \partial _{v^{ j  }}B_{ i  b} - \partial _{x^b}a_{ i   j  } \right) \left( a^{-1}\right) _{ j   n  } \\
& \quad - \left( \partial _{v^{ j  }}a_{ i   l  } \right) \left( \left( a^{-1}\right) _{ j   n  }\tilde{B}_{ l  b} - \left( a^{-1}\right)_{ l   n  }\tilde{B}_{ j  b} \right) \Big] \left( e^{-1}\right) ^b_{\ a} \boldsymbol{n}^{\hat{ n  }}\wedge \boldsymbol{m}^a  \\
& + \Big[ \tilde{B}_{ j  c} \left( \tilde{B}_{ l  d} \partial _{v^{ j  }}a_{ i   l  } + \partial _{x^d}a_{ i   j  } \right) \\
& \quad + \left( \partial _{x^c} - \tilde{B}_{ j  b}\partial _{v^{ j  }} \right) B_{ i  d} \Big] \left( e^{-1}\right) ^c_{\ a} \left( e^{-1} \right) ^d_{\ b} \boldsymbol{m}^a \wedge \boldsymbol{m}^b \ , 
\end{split}
\end{equation}
\begin{equation}
\begin{split}
d\boldsymbol{m}^a = & \left( \partial _{u^{ i  }} - \tilde{A}_{ j   i  } \partial _{v^{ j  }} \right) e^a_{\ c} \left( e^{-1} \right) ^c_{\ b} \boldsymbol{\ell }^{ i  } \wedge \boldsymbol{m}^b \\
& + \left( \partial _{v^{ i  }}e^a_{\ c} \right) \left( a^{-1}\right) _{ i   j  } \left( e^{-1} \right) ^c_{\ f} \boldsymbol{n}^{\hat{ j  }} \wedge \boldsymbol{m}^b \\
& + \left[ \left( \partial _{x^b} - \tilde{B}_{ i  b} \partial _{v^{ i  }} \right) e^a_{\ c} \right] \left( e^{-1} \right) ^b_{ \ d}\left( e^{-1} \right) ^c_{\ f} \boldsymbol{m}^d \wedge \boldsymbol{m}^f \ .
\end{split}
\end{equation}
From these expression, one can readily read off the correspondence between the conditions on the covariant derivatives of $\boldsymbol{F}$ and the restrictions on the metric components given in section 2.1.
\section{The metric components $A_{ij}$}
We set $\boldsymbol{V} \cdot \boldsymbol{b} = 0$. The indices $i$ and $j$ run from $1,2,...,k$. For each value of ($i$,$j$) we get an equality. We read off the maximum of the $d_i's$ from this. The dimension $k$ of the $(k, k + m)$ manifold is varied from 1 to 4. These results are easily manipulated to give the other differentials.
\subsection{The case \underline{$k = 1$}}
Possible boost vectors:
\begin{enumerate}
\item 0
\item 1
\end{enumerate}
$\textbf{V} \cdot \boldsymbol{b} = 0$ gives, in each case:
\begin{enumerate}
\item Non-degenerate
\item $~\forall$ $A_{ij}$: $d_1 = 2$, Kundt , see subsection \ref{General observations}
\end{enumerate}
\subsection{The case \underline{$k = 2$}}
Possible boost vectors:
\begin{enumerate}
\item (0,0)
\item (0,1)
\item (1,1)
\item (1,2)
\end{enumerate}
$\textbf{V} \cdot \boldsymbol{b} = 0$ gives, in each case:
\begin{enumerate}
\item Non-degenerate
\item $~\forall$ $A_{ij}$: $d_2 = 2$ Kundt, see subsection \ref{General observations}
\item $~\forall$ $A_{ij}$: $d_1 + d_2 = 2$ $\rightarrow A_{ij} \in [v_1^2,v_1v_2,v_2^2]$
\item \begin{enumerate}[i)]
\item For $A_{11}$: $d_1 + 2 d_2 = 2 \rightarrow $ $A_{11} \in [v_1^2,v_2]$ 
\item For $A_{12}$: $d_1 + 2 d_2 = 3 \rightarrow$ $A_{12} \in [v_1^3,v_1v_2]$
\item For $A_{22}$: $d_1 + 2 d_2 = 4 \rightarrow$ $A_{22} \in [v_1^4,v_1^2v_2,v_2^2]$  
\end{enumerate} 
\end{enumerate}
Before we go on we want to make two comments that simplify the notation.
\\
\\
Firstly, we forego writing the "$A_{22} \in [v_1^4,v_1^2v_2,v_2^2]$" part of the expression. It contains the same information as its corresponding equality. Secondly, for the first half of the boost vectors, we have the same situation as for \underline{\textit{k = 1}}. The difference is that results previously obtained for $d_1$ now pertain to $d_2$. In conclusion, each time we increase the value of \textit{k}, the first half of the boost vectors, the ones with a zero in the first slot, will give the same results as the situation where \textit{k} is one less. The difference is that the result for $v_i$ is shifted to $v_{i+1}$. Henceforth, we will therefore omit these. 

\subsection{The case \underline{$k = 3$}}
Possible boost vectors:
\begin{enumerate}
\setcounter{enumi}{4}
\item (1,1,1)
\item (1,1,2)
\item (1,2,2)
\item (1,2,4)
\end{enumerate}
$\textbf{V} \cdot \boldsymbol{b} = 0$ gives, in each case:
\begin{enumerate}
\setcounter{enumi}{4}
\item $~\forall$ $A_{ij}$: \underline{$d_1 + d_2 + d_3 = 2$} 
\item \begin{enumerate}[i)]
\item For $A_{11}$: \underline{$d_1 + d_2 + 2d_3 = 2$} 
\item For $A_{13},A_{23}$: \underline{$d_1 + d_2 + 2d_3 = 3$} 
\item For $A_{33}$: \underline{$d_1 + d_2 + 2d_3 = 4$} 
\end{enumerate}
\item \begin{enumerate}[i)]
\item For $A_{11}$: \underline{$d_1 + 2d_2 + 2d_3 = 2$}
\item For $A_{12},A_{23}$: \underline{$d_1 + 2d_2 + 2d_3 = 3$}
\item For $A_{33}$: \underline{$d_1 + 2d_2 + 2d_3 = 4$}
\end{enumerate}
\item \begin{enumerate}[i)]
\item For $A_{11}$: \underline{$d_1 + 2d_2 + 4d_3 = 2$}
\item For $A_{12}$: \underline{$d_1 + 2d_2 + 4d_3 = 3$}
\item For $A_{22}$: \underline{$d_1 + 2d_2 + 4d_3 = 4$}
\item For $A_{13}$: \underline{$d_1 + 2d_2 + 4d_3 = 5$}
\item For $A_{23}$: \underline{$d_1 + 2d_2 + 4d_3 = 6$}
\item For $A_{33}$: \underline{$d_1 + 2d_2 + 4d_3 = 8$}
\end{enumerate}
\end{enumerate}
\subsection{The case \underline{$k = 4$}}
Possible boost vectors:
\begin{enumerate}
\setcounter{enumi}{8}
\item (1,1,1,1)
\item (1,1,1,2)
\item (1,1,2,2)
\item (1,1,2,4)
\item (1,2,2,2)
\item (1,2,2,4)
\item (1,2,4,4)
\item (1,2,4,8)
\end{enumerate}
$\textbf{V} \cdot \boldsymbol{b} = 0$ gives
\begin{enumerate}
\setcounter{enumi}{8}
\item $~\forall$ $A_{ij}$: $d_1 + d_2 + d_3 + d_4 = 2$ 
\item \begin{enumerate}[i)]
\item  For $A_{11},A_{22},A_{33},A_{12},A_{13},A_{23}$: \underline{$d_1 + d_2 + d_3 + 2d_4 = 2$}
\item For $A_{14},A_{24},A_{34}$: \underline{$d_1 + d_2 + d_3 + 2d_4 = 3$}
\item For $A_{44}$: \underline{$d_1 + d_2 + d_3 + 2d_4 = 4$}
\end{enumerate}
\item \begin{enumerate}[i)]
\item For $A_{11},A_{22},A_{12}$: \underline{$d_1 + d_2 + 2d_3 + 2d_4 = 2$}
\item For $A_{13},A_{14},A_{23},A_{24}$: \underline{$d_1 + d_2 + 2d_3 + 2d_4 = 3$}
\item For $A_{33},A_{44},A_{34}$: \underline{$d_1 + d_2 + 2d_3 + 2d_4 = 4$}
\end{enumerate}
\item \begin{enumerate}[i)]
\item For $A_{11},A_{22},A_{12}$: \underline{$d_1 + d_2 + 2d_3 + 4d_4 = 2$}
\item For $A_{13},A_{23}$: \underline{$d_1 + d_2 + 2d_3 + 4d_4 = 3$}
\item For $A_{33}$: \underline{$d_1 + d_2 + 2d_3 + 4d_4 = 4$}
\item For $A_{14},A_{24}$: \underline{$d_1 + d_2 + 2d_3 + 4d_4 = 5$}
\item For $A_{34}$: \underline{$d_1 + d_2 + 2d_3 + 4d_4 = 6$}
\item For $A_{44}$: \underline{$d_1 + d_2 + 2d_3 + 4d_4 = 8$}
\end{enumerate}
\item \begin{enumerate}[i)]
\item For $A_{11}$: \underline{$d_1 + 2d_2 + 2d_3 + 2d_4 = 2$}
\item For $A_{12},A_{13},A_{14}$: \underline{$d_1 + 2d_2 + 2d_3 + 2d_4 = 3$}
\item For $A_{22},A_{33},A_{44},A_{23},A_{24},A_{34}$: \underline{$d_1 + 2d_2 + 2d_3 + 2d_4 = 4$}
\end{enumerate}
\item \begin{enumerate}[i)]
\item For $A_{11}$: \underline{$d_1 + 2d_2 + 2d_3 + 4d_4 = 2$}
\item For $A_{12},A_{13}$: \underline{$d_1 + 2d_2 + 2d_3 + 4d_4 = 3$}
\item For $A_{22},A_{33},A_{23}$: \underline{$d_1 + 2d_2 + 2d_3 + 4d_4 = 4$}
\item For $A_{14}$: \underline{$d_1 + 2d_2 + 2d_3 + 4d_4 = 5$}
\item For $A_{24},A_{34}$: \underline{$d_1 + 2d_2 + 2d_3 + 4d_4 = 6$}
\item For $A_{44}$: \underline{$d_1 + 2d_2 + 2d_3 + 4d_4 = 8$}
\end{enumerate}
\item \begin{enumerate}[i)]
\item For $A_{11}$: \underline{$d_1 + 2d_2 + 4d_3 + 4d_4 = 2$}
\item For $A_{12}$: \underline{$d_1 + 2d_2 + 4d_3 + 4d_4 = 3$}
\item For $A_{22}$: \underline{$d_1 + 2d_2 + 4d_3 + 4d_4 = 4$}
\item For $A_{13},A_{14}$: \underline{$d_1 + 2d_2 + 4d_3 + 4d_4 = 5$}
\item For $A_{23},A_{24}$: \underline{$d_1 + 2d_2 + 4d_3 + 4d_4 = 6$}
\item For $A_{33},A_{44},A_{34}$: \underline{$d_1 + 2d_2 + 4d_3 + 4d_4 = 8$}
\end{enumerate}
\item \begin{enumerate}[i)]
\item For $A_{11}$: \underline{$d_1 + 2d_2 + 4d_3 + 8d_4 = 2$}
\item For $A_{12}$: \underline{$d_1 + 2d_2 + 4d_3 + 8d_4 = 3$}
\item For $A_{22}$: \underline{$d_1 + 2d_2 + 4d_3 + 8d_4 = 4$}
\item For $A_{13}$: \underline{$d_1 + 2d_2 + 4d_3 + 8d_4 = 5$}
\item For $A_{23}$: \underline{$d_1 + 2d_2 + 4d_3 + 8d_4 = 6$}
\item For $A_{33}$: \underline{$d_1 + 2d_2 + 4d_3 + 8d_4 = 8$}
\item For $A_{14}$: \underline{$d_1 + 2d_2 + 4d_3 + 8d_4 = 9$}
\item For $A_{24}$: \underline{$d_1 + 2d_2 + 4d_3 + 8d_4 = 10$}
\item For $A_{34}$: \underline{$d_1 + 2d_2 + 4d_3 + 8d_4 = 12$}
\item For $A_{44}$: \underline{$d_1 + 2d_2 + 4d_3 + 8d_4 = 16$}
\end{enumerate}
\end{enumerate}

\end{document}